\def\MM{{\mathbf{M}}}
\def\DD{{\mathbf{D}}}
\def\CC{{\mathbf{C}}}
\newdimen\pIR
\newcommand\R{{\rm I\kern\pIR R}}
\def\Reals#1{\mathbb{R}^{#1}}
\newcommand{\Half}{\frac{1}{2}}
\newtheorem{theorem}{Theorem}[section]
\newtheorem{definition}[theorem]{Definition}
\newtheorem{lemma}[theorem]{Lemma}
\newtheorem{fact}[theorem]{Fact}
\newtheorem{proposition}[theorem]{Proposition}
\newcommand{\M}[1]{\mathbf{#1}}
\newcommand{\SM}[1]{\mathbf{\bar{#1}}}
\newcommand{\V}[1]{\bm{#1}}
\newcommand{\T}{\top}
\newcommand{\MG}[2]{\mathcal{N}(#1,#2)}
\newcommand{\OO}[1]{O(#1)}
\newcommand{\Nsp}[1]{ {\| #1 \|}_{2}}
\newcommand{\Mij}[3]{#1_{{#2},{#3}}}
\title{Scalable Parallel Factorizations of SDD Matrices
  and Efficient Sampling for Gaussian Graphical Models}
\author{Dehua Cheng\footnotemark[1]\\USC \and Yu Cheng\footnotemark[2]\\USC \and Yan Liu\thanks{Supported in
  part by NSF research grants IIS-1134990, IIS-1254206 and U.S. Defense Advanced Research Projects Agency (DARPA) under Social Media in Strategic Communication (SMISC) program, Agreement Number W911NF-12-1-0034.}\\USC \and Richard Peng\\MIT \and Shang-Hua Teng\thanks{Supported in
  part by NSF grants CCF-1111270 and CCF-096448 and
  by the Simons Investigator Award from the Simons Foundation.}\\ USC}
\begin{document}

\maketitle

\begin{abstract}
Motivated by a sampling problem basic to computational statistical inference, 
   we develop a nearly optimal algorithm for a fundamental problem in
   spectral graph theory and numerical analysis.
  Given an $n\times n$ SDDM matrix ${\bf \mathbf{M}}$,
  and a constant $-1 \leq p \leq 1$, our algorithm gives efficient
  access to a sparse $n\times n$
  linear operator $\tilde{\mathbf{C}}$ such that 
$${\mathbf{M}}^{p} \approx \tilde{\mathbf{C}} \tilde{\mathbf{C}}^\top.$$
The solution is based on factoring ${\bf \mathbf{M}}$ into a product
of simple and sparse matrices using squaring and spectral sparsification.

For ${\mathbf{M}}$ with $m$ non-zero entries, 
our algorithm takes work nearly-linear in $m$,
and polylogarithmic depth on  a parallel machine with $m$ processors.
This gives the first sampling algorithm 
  that only requires {\em nearly linear work} and 
  \textit{$n$ i.i.d. random univariate Gaussian samples}
  to generate {\em i.i.d. random samples} for
  $n$-dimensional Gaussian random fields with SDDM precision matrices.
  For sampling this natural subclass of Gaussian random fields,
  it is optimal in the randomness and nearly optimal 
  in the work and parallel complexity.
 In addition, our sampling algorithm can be directly extended to Gaussian random fields
  with SDD precision matrices.

\end{abstract}

\section{Introduction}

Sampling from a multivariate probability distribution is one of the most
  fundamental problems in statistical
  machine learning and statistical inference.
In the sequential computation framework,
  the Markov Chain Monte Carlo (MCMC) based
  Gibbs sampling algorithm and its theoretical underpinning built on the
  celebrated Hammersley-Clifford Theorem have provided the algorithmic and
  mathematical foundation for analyzing graphical models~\cite{jordan98, Koller:2009:PGM:1795555}.
However, unless there are significant independences among the variables,
  the Gibbs sampling process tends to be highly sequential~\cite{jordan98}.
Moreover, many distributions require the Gibbs sampling process
  to take a significant number of iterations to
  achieve a reliable estimate.
Despite active efforts by various research groups~\cite{
  Gonzalez+al:aistatspgibbs, Ihler03efficientmultiscale, Johnson13,welling14},
  the design of a scalable parallel Gibbs sampling algorithm remains a challenging problem.
The ``holy grail'' question in parallel sampling can be characterized as:
\begin{quote}
{\em Is it possible to obtain a characterization of the family of Markov random
  fields from which one can draw a sample in polylogarithmic parallel time
  with nearly-linear total work?}
\end{quote}

Formally, the performance of a parallel sampling algorithm can be characterized
  using the following three parameters:
\begin{itemize}
\item {\bf Time complexity (work)}: 
  the total amount of operations needed 
  by the sampling algorithm to generate the first and subsequent
  random samples.  
\item {\bf Parallel complexity (depth)}:  
  the length of the longest critical path
  of the algorithm in generating the first and subsequent random samples.
\item {\bf Randomness complexity}:  total random numbers (scalars)
  needed by the sampling algorithm to generate a random sample.
\end{itemize}

One way to obtain an (approximately) random sample is through Gibbs sampling. 
The basic Gibbs sampling method follows a randomized
  process that iteratively
  resamples each variable conditioned on the value of other variables.
This method is analogous to the 
  Gauss-Seidel method for solving linear systems,
  and this connection is most apparent on an important class of multivariate probability distributions:
{\em multivariate normal distributions} or {\em multivariate Gaussian distributions} \cite{LohWainwright}.
In the framework of graphical models, such a distribution is commonly specified by 
  its precision matrix  $\M{\Lambda}$ and potential vector $\V{h} $, i.e.,
  $\text{Pr}(\mathbf{x}| \M{\Lambda}, \V{h}) \propto \exp (-\frac{1}{2} \mathbf{x}^{\T} \M{\Lambda} \mathbf{x} + \V{h}^{\T} \mathbf{x})$, which defines 
  a {\em Gaussian random field}.

The analogy between the Gibbs sampling method and Gauss-Seidel 
  iterative method for solving
  linear systems imply that it has similar worst-case behaviors.
For example, if a Gaussian random field has $n$ variables with a chain structure,
  it may take $\Omega(n)$ iterations for the Gibbs sampling algorithm to reasonably converge \cite{LiuKW13}.
Given an $n\times n$ precision matrix with $m$ non-zero entries, 
  the Gibbs sampling process may take $\Omega(mn)$  total work
  and require $\Omega(n \chi(G) )$ iterations, where $\chi(G)$ is
  the chromatic number of the graph underlying the precision matrix. 
It may also use $\Omega(n^2)$ random numbers 
  ($n$ per iteration) to generate its first sample. 
However, in practice it usually performs significantly better than these worst case bounds.
 
Our study is partly motivated by the recent work of Johnson,
  Saunderson, Willsky~\cite{Johnson13} that provides a mathematical
  characterization of the {\em Hogwild Gaussian Gibbs sampling} heuristics
  (see also Niu, Recht, R\'{e}, and Wright \cite{NiuHogwild}).
Johnson {\em et al.} proved that if the precision matrix $\M{\Lambda}$
  is {\em symmetric generalized diagonally dominant}, aka. an {\em H-matrix}
  \footnote{A symmetric real $n\times n$ matrix $\MM = (m_{i,j})$
    is generalized diagonally dominant if it can be scaled into
    a diagonally dominant matrix, that is, there exists a positive
    diagonal scaling matrix $\DD = {\rm diag}([d_1,...,d_n])$
    such that $\DD \MM \DD$ is strictly diagonally dominant.},
  then Hogwild Gibbs sampling converges with the correct mean
  for any variable partition (among parallel processors) and
  for any number of locally sequential Gibbs sampling steps
  with old statistics from other processors.
This connection to H-matrices leads us to draw from developments in
  nearly-linear work solvers for linear systems in symmetric diagonally dominant (SDD)
  matrices~\cite{SpielmanTengSolver,MillerPeng,Kelner,PengSpielman},
  which are algorithmically inter-reducible to H-matrices~\cite{DaitchS08}.
Most relevant to the parallel sampling question is a result with worst-case logarithmic
  upper bound on parallel complexity~\cite{PengSpielman}.
In this paper we develop a scalable parallel sampling algorithm 
  for Gaussian random fields with SDD precision matrices
  by extending the algorithmic framework of~\cite{PengSpielman}
  from solving linear systems to sampling graphical models.

\subsection{Our Contributions}

We focus on the problem of parallel sampling for
  a Gaussian random field
  whose precision matrix $\M{\Lambda}$ is {\em SDDM},
  which are positive definite matrices with non-positive off-diagonals.
This is a natural subclass of
  Gaussian random fields which arises in applications such as image
  denoising \cite{ImageDenoising}.
It can also be used in the rejection/importance sampling
  framework~\cite{andrieu2003introduction} as a proposed distribution,
  likely giving gains over the commonly used diagonal precision matrices in this framework.

Although faster sampling is possible for some subset of 
  SDDM precision matrices \cite{LiuKW13,PapandreouYuille},
  $\Omega(mn)$ remains a worst-case complexity barrier
  for generating unbiased samples in this 
  subclass of Gaussian random fields.
While Newton's method leads to sampling algorithms with
  polylogarithmic parallel steps, 
  these algorithms are expensive due to calls to dense matrix multiplication,
  {\em even if} the original graphical model is sparse. 
Our algorithm can generate i.i.d. samples 
  from the Gaussian random field with an approximate covariance
  with nearly-linear total work.
In contrast, samples generated from Gibbs sampling are correlated, 
  but are drawn from the exact covariance in the limit.

Our algorithm is based on the following three-step 
  numerical framework for sampling a Gaussian random field with parameters
  $\left(\M{\Lambda}, \V{h}\right)$: $\text{Pr}(\mathbf{x}| \M{\Lambda}, \V{h}) \propto \exp (-\frac{1}{2} \mathbf{x}^{\T} \M{\Lambda} \mathbf{x} + \V{h}^{\T} \mathbf{x})$.
\begin{enumerate}
\item Compute the mean $\V{\mu}$ of the distribution from
  $\left(\M{\Lambda}, \V{h}\right)$ by solving the linear system
  $\M{\Lambda}\V{\mu}=\V{h}$.

\item Compute a factor of the covariance matrix by finding 
  an inverse square-root factor of the precision matrix $\M{\Lambda}$, i.e.,
\begin{align}
  \M{\Lambda}^{-1}=\M{C}\M{C}^{\T}.
\end{align}

\item If $\M{C}$ is an $n\times n'$ matrix, then we can obtain a sample of 
  $\MG{\V{\mu}}{\M{\Lambda}}$ by first drawing a random vector
  $\V{z}={\left(z_1,z_2,\dots,z_{n'}\right)}^{\T}$,
  where $z_i,i=1,2,\dots,n'$ are i.i.d. standard Gaussian random variables,
  and return $\V{x} =\M{C}\V{z}+\V{\mu}$.
\end{enumerate}

If $\M{C}$ has an efficient sparse representation, then Step 3 has a scalable
  parallel implementation with parallel complexity polylogarithmic 
  in $n$.
Polylogarithmic depth and nearly-linear work solvers for SDDM linear systems
  \cite{PengSpielman} also imply a scalable 
  parallel solution to Step 1.
So this framework allows us to 
  concentrate on step 2: factoring the covariance matrix $\M{\Lambda}$.

We give two approaches that take nearly-linear work and polylogarithmic parallel depth
  to construct a linear operator $\tilde{\CC}$ 
  such that $\M{\Lambda}^{-1} \approx \tilde{\M{C}} \tilde{\M{C}}^{\T}$.
The first approach factorizes $\M{\Lambda}$ based on an
  analog of the \emph{signed edge-vertex incidence matrix}.
It differs from previous approaches such as the one by 
  Chow and Saad~\cite{chow14precon}
  in that it generates a non-square $\tilde{\M{C}}$.
This leads to a sampling algorithm with
  randomness complexity of $m$, the number of nonzeros in $\M{\Lambda}$: 
  it generates a sample of $(\M{\Lambda},\V{h})$
  by first taking random univariate Gaussian samples,
  one per non-zero entry in $\M{\Lambda}$.
This sampling approach can be effective if $m = O(n)$.

The second approach, our main result,  
  produces a \emph{square} $n\times n$ inverse
  factorization $\tilde{\CC}$.
It produces, after efficient preprocessing, a scalable parallel {\em i.i.d. random vector} generator for $(\M{\Lambda},\V{h})$
  with optimal randomness complexity: each random sample of 
  $(\M{\Lambda},\V{h})$ is generated from
  $n$ {\em i.i.d.} standard Gaussian random variables.
This factorization problem motivates us to develop a
  highly efficient algorithm 
  for a fundamental problem in {\em spectral graph theory} that could be significant in its own right:
\begin{quote}
Given an $n\times n$ SDDM matrix $\MM$,
  a constant $-1 \leq p \leq 1$, and an approximation parameter $\epsilon$,
  compute a sparse representation of
  an  $n\times n$  linear operator $\tilde{\CC}$ such that 
\begin{align}
  \M{M}^{p} \approx_{\epsilon} \tilde{\CC} \tilde{\CC}^\T,
\end{align}
\end{quote}
where $\approx_{\epsilon}$ is spectral similarity between linear operators
which we will define at the start of Section~\ref{sec:background}.
Our algorithm factors $\M{M}^{p}$ into a series of well-conditioned,
  easy-to-evaluate, sparse matrices, which we call a sparse factor chain. 
For any constant $-1 \leq p \leq 1$, our algorithm 
  constructs a sparse factor chain with 
  $O(m\cdot \log^{c_1} n\cdot \log^{c_2} (\kappa/\epsilon) \cdot \epsilon^{-4})$
  work for modest
  \footnote{Currently, both constants are at most $5$ -- these constants
    and the exponent $4$ for $(1/\epsilon)$ can be further reduced
    with the improvement of spectral sparsification algorithms
    \cite{SpielmanTengSpectralSparsification,SpielmanSrivastava,Koutis14:arxiv,MillerPX13b:arxiv}
    and tighter numerical analysis of our approaches.}
  constants $c_1$ and $c_2$, where $\kappa$ 
  denotes the condition number
  \footnote{As shown in \cite{SpielmanTengLinear}(Lemma 6.1),
    for an SDDM matrix $M$, $\kappa(\M{M})$ is essentially upper bounded
    by the ratio of the largest off-diagonal entry to the smallest non-zero 
    off-diagonal entry of $-\M{M}$, thus, $\kappa(M)$ is always bounded
    by ${\rm poly}(n)/\epsilon_{{\sb M}}$, where $\epsilon_{{\sb M}}$
    denotes the  machine precision.
    Note also, the running time
    of any numerical algorithm for factoring $\M{M}^{-1}$
    is $\Omega(m+ \log{(\kappa(\M{M})/\epsilon)})$ where 
    $\Omega(\log{(\kappa(\M{M})/\epsilon)})$ is the bits of precision
    required in the worst case.}
  of $\M{M}$.
Moreover, for the case when 
  $p = -1$ (i.e., the factorization problem needed for 
  Gaussian sampling), we can remove the polynomial dependency of $\epsilon$,
  so the complexity in this case is of form 
   $O(m\cdot\log^{c_1} n\cdot\log^{c_2} (\kappa/\epsilon))$.
Of equal importance, our algorithm can be
implemented to run in polylogarithmic depth on a parallel machine with
$m$ processors. 
Prior to our work, no algorithm could provide such a factorization in nearly-linear time.
Because the factorization of an SDD
  \footnote{A matrix $\MM = (m_{i,j})$ is
    {\em symmetric diagonally dominant} (SDD) if for all $i$,
    $m_{i,i} > \sum_{j \neq i} \left|m_{i,j}\right|$.}
  matrix can be reduced to the factorization of an SDDM matrix twice
  its dimensions (See Appendix \ref{sec:reduction}),
  our sampling algorithm can be directly extended to Gaussian random fields
  with SDD precision matrices. 

Our result is a step toward understanding the holy grail 
  question in parallel sampling mentioned earlier.
It provides a natural and non-trivial sufficient condition for Gaussian random fields
  that leads to parallel sampling algorithms that run in polylogarithmic parallel time and nearly-linear total work.
While we are mindful of the many classes of precision matrices that are not reducible to SDD matrices,
  we believe our approach is applicable to a wider range of matrices.
The key structural property that we use is that matrices related to the square
  of the off-diagonal entries of SDDM matrices have sparse representations.
We believe incorporating such structural properties is crucial for overcoming the
  $\Omega(mn)$-barrier for sampling from Gaussian random fields, and
  that our construction of the sparse factor chain may lead to efficient parallel
  sampling  algorithms for other popular graphical models.
By its nature of development, we hope our algorithmic advance
  will also help to strengthen the connection 
  between machine learning and spectral graph theory, 
  two of the active fields in understanding large data and networks.

\section{Background and Notation}
\label{sec:background}

In this section, we introduce the definitions and notations 
  that we will use in this paper.
We use $\rho(\M{M})$ to denote the spectral radius of the matrix $\M{M}$,
  which is the maximum absolute value of the eigenvalues of $\M{M}$.
We use $\kappa(\M{M})$ to denote its condition number corresponding to the spectral norm,
  which is the ratio between the largest and smallest singular values of $\MM$.

In our analysis, we will make extensive use of spectral approximation.
We say $\M{X} \approx_\epsilon \M{Y}$ when
\begin{align}
\exp \left(\epsilon\right) \M{X} \succcurlyeq \M{Y} \succcurlyeq  \exp \left( -\epsilon \right) \M{X},
\end{align}
where the Loewner ordering $ \M{Y} \succcurlyeq \M{X}$ means
  $\M{Y}-\M{X}$ is positive semi-definite.
 We use the following standard facts about the Loewner positive definite
 order and approximation. 

\begin{fact}\label{fact}
For positive semi-definite matrices
$\M{X}$, $\M{Y}$, $\M{W}$ and $\M{Z}$,
\begin{enumerate}
\item [a.] if $\M{Y} \approx_{\epsilon} \M{Z}$,
then $\M{X} + \M{Y} \approx_{\epsilon} \M{X} + \M{Z}$;

\item [b.] if  $\M{X} \approx_{\epsilon} \M{Y}$ and $\M{W} \approx_{\epsilon} \M{Z}$,
  then $\M{X} + \M{W} \approx_{\epsilon} \M{Y} + \M{Z} $;

\item [c.]
if $\M{X} \approx_{\epsilon_1} \M{Y}$ and
$\M{Y} \approx_{\epsilon_2} \M{Z}$,
then $\M{X} \approx_{\epsilon_1 + \epsilon_2} \M{Z}$;

\item [d.]
if $\M{X}$ and $\M{Y}$ are positive definite matrices
  such that $\M{X} \approx_{\epsilon} \M{Y}$,
  then $\M{X}^{-1} \approx_{\epsilon} \M{Y}^{-1}$;

\item [e.]
if $\M{X} \approx_{\epsilon} \M{Y}$
and $\M{V}$ is a matrix, then
$  \M{V}^{\T} \M{X} \M{V} \approx_{\epsilon} \M{V}^{\T} \M{Y} \M{V}.$
\end{enumerate}
\end{fact}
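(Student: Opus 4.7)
The plan is to unpack the definition $\M{X} \approx_\epsilon \M{Y}$ into the two-sided Loewner sandwich $e^{\epsilon}\M{X} \succcurlyeq \M{Y} \succcurlyeq e^{-\epsilon}\M{X}$ and then invoke four elementary closure properties of the semidefinite order: (i) it is preserved under addition; (ii) it is preserved under nonnegative scalar multiplication; (iii) it is preserved under the congruence $\M{A}\mapsto \M{V}^{\T}\M{A}\M{V}$ for any (possibly rectangular) $\M{V}$; and (iv) on positive definite matrices it is reversed by inversion. None of these need anything beyond the quadratic-form characterization $\M{A}\succcurlyeq 0 \iff x^{\T}\M{A}x \geq 0$ for all $x$, together with the fact that $e^{\pm\epsilon}\geq 0$.

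For part (a) I would first observe that $\M{X}\approx_\epsilon \M{X}$ trivially, since $(e^{\epsilon}-1)\M{X}$ and $(1-e^{-\epsilon})\M{X}$ are both PSD whenever $\M{X}$ is. Adding this to the Loewner chain $e^{\epsilon}\M{Y}\succcurlyeq \M{Z}\succcurlyeq e^{-\epsilon}\M{Y}$ coming from $\M{Y}\approx_\epsilon \M{Z}$ gives $e^{\epsilon}(\M{X}+\M{Y})\succcurlyeq \M{X}+\M{Z}\succcurlyeq e^{-\epsilon}(\M{X}+\M{Y})$, which is (a). Part (b) is then immediate by adding the two sandwiches for $\M{X}\approx_\epsilon \M{Y}$ and $\M{W}\approx_\epsilon \M{Z}$ termwise (or by two applications of (a) chained through (c)). Part (c) follows by scaling: multiplying $e^{\epsilon_1}\M{X}\succcurlyeq \M{Y}\succcurlyeq e^{-\epsilon_1}\M{X}$ on each side by $e^{\pm\epsilon_2}$ and concatenating with $e^{\epsilon_2}\M{Y}\succcurlyeq \M{Z}\succcurlyeq e^{-\epsilon_2}\M{Y}$ yields $e^{\epsilon_1+\epsilon_2}\M{X}\succcurlyeq \M{Z}\succcurlyeq e^{-(\epsilon_1+\epsilon_2)}\M{X}$. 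For (e) I would just test the Loewner inequality on the vector $\M{V}x$ for arbitrary $x\in\Reals{n}$: the identity $x^{\T}\M{V}^{\T}\M{A}\M{V}x = (\M{V}x)^{\T}\M{A}(\M{V}x)$ transfers $e^{\epsilon}\M{X}\succcurlyeq \M{Y}\succcurlyeq e^{-\epsilon}\M{X}$ to the congruenced versions.

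The one genuinely nontrivial ingredient, and the step I expect to be the main obstacle, is part (d), since it relies on operator monotonicity of inversion rather than on purely linear closure properties. I would derive the needed fact, namely $\M{A}\succcurlyeq \M{B}\succ 0 \implies \M{B}^{-1}\succcurlyeq \M{A}^{-1}$, by applying (e) with $\M{V}=\M{B}^{-1/2}$ to reduce to the case $\M{B}=\M{I}$: if $\M{C}\succcurlyeq \M{I}$ is positive definite, then the spectral theorem gives all eigenvalues of $\M{C}$ at least $1$, so all eigenvalues of $\M{C}^{-1}$ are at most $1$, i.e.\ $\M{C}^{-1}\preccurlyeq \M{I}$; reversing the congruence gives $\M{A}^{-1}\preccurlyeq \M{B}^{-1}$. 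Applying this monotonicity to both halves of $e^{\epsilon}\M{X}\succcurlyeq \M{Y}\succcurlyeq e^{-\epsilon}\M{X}$ yields $e^{-\epsilon}\M{X}^{-1}\preccurlyeq \M{Y}^{-1}\preccurlyeq e^{\epsilon}\M{X}^{-1}$, which is exactly $\M{X}^{-1}\approx_\epsilon \M{Y}^{-1}$. Everything else in the proof is a one-line manipulation of the sandwich inequality.
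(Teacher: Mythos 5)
Your proof is correct. The paper itself states Fact~\ref{fact} without proof (it is cited as a collection of standard properties of the Loewner order), so there is no in-paper argument to compare against; your derivation is the standard one, and each step checks out: (a) and (b) follow from termwise addition of the sandwich inequalities together with $e^{\epsilon}\M{X}\succcurlyeq \M{X}\succcurlyeq e^{-\epsilon}\M{X}$ for PSD $\M{X}$, (c) from rescaling and concatenating, (e) from testing quadratic forms on $\M{V}x$, and (d) from the anti-monotonicity of inversion, which you correctly reduce via the congruence $\M{V}=\M{B}^{-1/2}$ to the diagonalizable case $\M{C}\succcurlyeq\M{I}\implies\M{C}^{-1}\preccurlyeq\M{I}$. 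The only implicit hypothesis worth flagging is $\epsilon\geq 0$ (so that $(e^{\epsilon}-1)\M{X}$ and $(1-e^{-\epsilon})\M{X}$ are PSD), which is the paper's standing convention for approximation parameters.
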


We say that $\M{C}$ is an \emph{inverse square-root factor} of 
  $\M{M}$, if $\M{C}\M{C}^{\T}=\M{M}^{-1}$.
When the context is clear, we sometime refer to $\M{C}$ as an inverse factor
  of $\M{M}$.
Note that the inverse factor of a matrix is not unique.

We will work with SDDM matrices in our algorithms,
and denote them using $\M{M}$.
SDDM matrices are positive-definite SDD matrices.
The equivalence between solving linear systems in such matrices
and graph Laplacians, weakly-SDD matrices, and M-matrices
are well known~\cite{SpielmanTengSolver,DaitchS08,Kelner,PengSpielman}.
In Appendix~\ref{sec:reduction}, we rigorously prove that
this connection carries over to inverse factors.

SDDM matrices can be split into diagonal and off-diagonal entries.
We will show in Appendix~\ref{sec:normalization} that by allowing
diagonal entries in both matrices of this splitting, we can ensure
that the diagonal matrix is $\M{I}$ without loss of generality.
\begin{restatable}{lemma}{restateIxOk}
\label{lem:ixOk}
Let $\MM=\M{D}-\M{A}$ be a SDDM matrix, where $\M{D}=\text{diag}\left(d_1,d_2,\dots,d_n \right)$ is its diagonal, and let $\kappa = \max\{ 2,\kappa\left(\M{D}-\M{A}\right) \} $. We can pick $c=\left(1 - 1/\kappa\right)/(\max_{i} d_i)$, so that all eigenvalues of $c\left( \M{D}-\M{A} \right)$ lie between $\frac{1}{2\kappa}$ and $2-\frac{1}{2\kappa}$. Moreover, we can rewrite $c\left( \M{D}-\M{A} \right)$ as a SDDM matrix $\M{I}-\M{X}$, where all entries of $\M{X}=\M{I}-c\left( \M{D}-\M{A} \right)$ are nonnegative, and $\rho\left(\M{X} \right) \leq 1-\frac{1}{2\kappa}$.
\end{restatable}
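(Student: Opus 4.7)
The plan is to reduce everything to two facts about the spectrum of $\MM = \M{D} - \M{A}$ relative to $d_{\max} := \max_i d_i$, and then read off all four conclusions (eigenvalue interval for $c\MM$, entrywise nonnegativity of $\M{X}$, and spectral radius of $\M{X}$) from those bounds.

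First, I would establish the two-sided bound
\begin{equation*}
d_{\max} \;\le\; \lambda_{\max}(\MM) \;\le\; 2\, d_{\max}.
\end{equation*}
The upper bound is Gershgorin applied to $\MM$: for any row $i$, the relevant disk has center $d_i$ and radius $\sum_{j\neq i} A_{ij} \le d_i$, so every eigenvalue is at most $2d_i \le 2 d_{\max}$. The lower bound is the Rayleigh quotient at $e_i$: $e_i^{\T} \MM e_i / \|e_i\|^2 = d_i$, hence $\lambda_{\max}(\MM) \ge d_{\max}$.

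Next, using $c = (1-1/\kappa)/d_{\max}$, I would derive the claimed eigenvalue interval for $c\MM$. The upper endpoint is immediate: $c\,\lambda_{\max}(\MM) \le c\cdot 2 d_{\max} = 2(1 - 1/\kappa) \le 2 - \tfrac{1}{2\kappa}$ since $\kappa \ge 2$. For the lower endpoint, I would use $\lambda_{\min}(\MM) = \lambda_{\max}(\MM)/\kappa(\MM)$ together with $\lambda_{\max}(\MM) \ge d_{\max}$ and $\kappa(\MM) \le \kappa$ to obtain
\begin{equation*}
c\,\lambda_{\min}(\MM) \;\ge\; \frac{(1 - 1/\kappa)\, d_{\max}}{d_{\max}\,\kappa(\MM)} \;\ge\; \frac{1 - 1/\kappa}{\kappa} \;\ge\; \frac{1}{2\kappa},
\end{equation*}
where the last inequality again uses $\kappa \ge 2$. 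This is the step whose bookkeeping I expect to be the trickiest, mainly because we must be careful to use $\kappa(\MM) \le \kappa$ and the $\kappa \ge 2$ hypothesis at the right places; everywhere else the argument is essentially arithmetic.

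Finally, I would verify the entrywise and spectral statements about $\M{X} = \M{I} - c\MM$. Off-diagonally, $\M{X}_{ij} = c A_{ij} \ge 0$ because the off-diagonals of $-\MM$ are nonnegative by definition of SDDM. On the diagonal, $\M{X}_{ii} = 1 - c d_i \ge 1 - c d_{\max} = 1/\kappa > 0$. The eigenvalues of $\M{X}$ are $1 - \mu$ as $\mu$ ranges over the eigenvalues of $c\MM$, so they lie in $[-1 + \tfrac{1}{2\kappa},\, 1 - \tfrac{1}{2\kappa}]$, giving $\rho(\M{X}) \le 1 - \tfrac{1}{2\kappa}$, as required.
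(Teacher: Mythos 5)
Your proposal is correct and follows essentially the same route as the paper's proof: Gershgorin for $\lambda_{\max}(\MM) \le 2d_{\max}$, the Rayleigh quotient at a standard basis vector for $\lambda_{\max}(\MM) \ge d_{\max}$, the resulting bound $\lambda_{\min}(\MM) \ge d_{\max}/\kappa$, and the entrywise check that $\M{I} - c\M{D}$ and $c\M{A}$ are nonnegative. Your version just spells out the diagonal entries of $\M{X}$ and the translation of the spectrum a bit more explicitly; no gaps.
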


Since all of our approximation guarantees are multiplicative, we will ignore the constant
factor rescaling, and use the splitting $\M{M} = \M{I} - \M{X}$ for the rest of our analysis.

For a symmetric, positive definite matrix $\M{M}$ with spectral decomposition
\begin{align}
  \M{M} = \sum_{i = 1}^{n} \lambda_i \V{u}_i \V{u}_i^\T,
\end{align}
its $p\textsuperscript{th}$ power for any $p\in \Reals{}$ is:
\begin{align}
  \M{M}^{p} = \sum_{i = 1}^{n} \lambda_i^{p} \V{u}_i \V{u}_i^\T.
\end{align}

\section{Overview}
\label{sec:overview}
We start by briefly outlining our approach.
Our first construction of $\tilde{\CC}$ follows from the fact that
  SDDM matrices can be factorized. 
The proof of the following lemma can be found in Appendix~\ref{sec:mbfactor}.
\begin{restatable}[Combinatorial Factorization]{lemma}{restateSddFactor}
\label{lem:sddfactor}
Let $\MM$ be an SDDM matrix with $m$ non-zero entries and condition number $\kappa$, then
\begin{itemize}
\item[(1)] $\MM$ can be factored into
\begin{align}
  \M{M} = \M{B}\M{B}^{\T},
\end{align}
where $\M{B}$ is an $n \times m$ matrix with at most 
  $2$ non-zeros per column;
\item[(2)] for any approximation parameter $\epsilon>0$, 
  if $\M{Z}$ is linear operator 
  satisfying $\M{Z} \approx_\epsilon \M{M}^{-1}$, then
the matrix
\begin{align}
  \tilde{\M{C}} = \M{Z} \M{B}
\end{align}
satisfies
\begin{align}
  \tilde{\M{C}} \tilde{\M{C}}^{\T} \approx_{2\epsilon} \M{M}^{-1}.
\end{align}
Moreover, a representation of $\tilde{\M{C}}$ can be computed in work
$\OO{m\cdot \log^{c_1}n\cdot \log^{c_2}\kappa}$ and depth
$\OO{\log^{c_3}n \cdot \log \kappa}$ for some modest
  constants $c_1$, $c_2$, and $c_3$ so that for any
  $m$-dimensional vector $\V{x}$, the product $\tilde{\M{C}} \V{x}$ can
be computed in work $\OO{(m + n \cdot \log^{c}n \cdot \log^3 \kappa)\cdot 
  \log(1/\epsilon)}$ and depth $\OO{\log n \cdot 
\log \kappa \cdot \log(1/\epsilon)}$ 
for a modest constant $c$.
\end{itemize}

\end{restatable}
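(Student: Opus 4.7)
My plan is to prove part (1) by the classical signed edge-vertex incidence factorization and part (2) by a short calculation with Fact~\ref{fact} together with the parallel SDD solver of~\cite{PengSpielman}.

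For part (1), I would write the SDDM matrix as $\MM = \M{L} + \M{D}_s$, where $\M{L}$ is the graph Laplacian of the weighted graph whose edge $\{i,j\}$ has weight $-m_{i,j}\geq 0$ (using that off-diagonal entries of an SDDM matrix are non-positive), and $\M{D}_s$ is the nonnegative diagonal ``slack'' that records the strict diagonal dominance of $\MM$. For each edge $\{i,j\}$, I would introduce a column of $\M{B}$ equal to $\sqrt{-m_{i,j}}(\V{e}_i-\V{e}_j)$, which has exactly $2$ non-zeros; for each positive slack entry $(\M{D}_s)_{i,i}$ I would introduce a column $\sqrt{(\M{D}_s)_{i,i}}\,\V{e}_i$, which has $1$ non-zero. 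A direct calculation then gives $\M{B}\M{B}^{\T}=\M{L}+\M{D}_s=\MM$. Counting, the number of columns is (off-diagonal nonzeros)/$2$ plus at most $n$, which is at most $m$, so padding with zero columns if necessary gives an $n\times m$ matrix $\M{B}$ with the claimed sparsity pattern.

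For part (2), assuming $\M{Z}$ is symmetric (which I would enforce by symmetrizing via $(\M{Z}+\M{Z}^{\T})/2$ if needed), I would first invoke Fact~\ref{fact}(d) to convert $\M{Z}\approx_{\epsilon}\MM^{-1}$ into $\M{Z}^{-1}\approx_{\epsilon}\MM$. Applying Fact~\ref{fact}(e) with $\M{V}=\M{Z}$ then yields
\begin{align}
\M{Z}=\M{Z}^{\T}\M{Z}^{-1}\M{Z}\approx_{\epsilon}\M{Z}^{\T}\MM\M{Z}=\M{Z}\MM\M{Z}.
\end{align}
Combining this with the hypothesis $\M{Z}\approx_{\epsilon}\MM^{-1}$ by transitivity (Fact~\ref{fact}(c)) gives
\begin{align}
\tilde{\M{C}}\tilde{\M{C}}^{\T}=\M{Z}\M{B}\M{B}^{\T}\M{Z}^{\T}=\M{Z}\MM\M{Z}\approx_{2\epsilon}\MM^{-1},
\end{align}
which is the desired spectral approximation bound.

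For the work/depth claims, the preprocessing cost is dominated by the construction of the approximate inverse operator $\M{Z}$, for which I would directly invoke the parallel SDDM solver of~\cite{PengSpielman}. That result already produces, in $O(m\cdot\log^{c_1}n\cdot\log^{c_2}\kappa)$ work and polylogarithmic depth, a linear operator whose application to any $n$-vector takes $O((m+n\cdot\log^{c}n\cdot\log^{3}\kappa)\cdot\log(1/\epsilon))$ work and $O(\log n\cdot\log\kappa\cdot\log(1/\epsilon))$ depth. To evaluate $\tilde{\M{C}}\V{x}$ for an $m$-dimensional $\V{x}$, I would first compute $\V{y}=\M{B}\V{x}$ in $O(m)$ work and $O(\log n)$ depth (since $\M{B}$ has $O(m)$ non-zeros), and then apply $\M{Z}$ to $\V{y}$, which matches the stated evaluation bounds.

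The only subtle point I anticipate is ensuring that the black-box approximate-inverse operator $\M{Z}$ returned by~\cite{PengSpielman} is genuinely symmetric, since both the identity $\tilde{\M{C}}\tilde{\M{C}}^{\T}=\M{Z}\MM\M{Z}$ and the use of Fact~\ref{fact}(e) rely on this. The standard fix (replacing $\M{Z}$ by $(\M{Z}+\M{Z}^{\T})/2$, or equivalently using the symmetric factored chain produced by that solver) preserves the spectral approximation guarantee and costs nothing asymptotically, so I expect this to be a minor bookkeeping issue rather than a real obstacle; the remainder of the argument is then a direct application of the facts and the cited solver.
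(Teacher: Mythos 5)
Your proposal matches the paper's own proof in all essentials: part (1) is the same edge-vertex incidence decomposition of $\MM$ into rank-one terms $\left|m_{i,j}\right|(\V{e}_i-\V{e}_j)(\V{e}_i-\V{e}_j)^{\T}$ plus diagonal slack terms $a_i\V{e}_i\V{e}_i^{\T}$, and part (2) is the same two-line computation $\M{Z}\M{B}\M{B}^{\T}\M{Z}^{\T}=\M{Z}\MM\M{Z}\approx_{\epsilon}\M{Z}\approx_{\epsilon}\MM^{-1}$ via Fact~\ref{fact}(d), (e), and (c), with the complexity bounds delegated to~\cite{PengSpielman}. Your remark on symmetrizing $\M{Z}$ is a fair bookkeeping point that the paper leaves implicit, but it does not change the argument.
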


In particular, the parallel solver algorithm from~\cite{PengSpielman} computes
  $\M{Z} \approx_\epsilon \M{M}^{-1}$ in total work nearly-linear in $m$,
  and depth polylogarithmic in $n$, $\kappa$ and $\epsilon^{-1}$.
 We also adapt parts of \cite{PengSpielman} to obtain our main algorithm 
  with optimal  randomness complexity, which constructs
   $\tilde{\M{C}}$ that is $n \times n$.
The high level idea is to break down a difficult linear operator, $\M{M}$,
into a series of products of operators that are individually easy to evaluate.
 One candidate expression is the identity
\begin{equation}
\left(\M{I} - \M{X}\right)^{-1} = \left(\M{I} + \M{X}\right) \left(\M{I} - \M{X}^2 \right)^{-1}.
\label{eqn:factorDirect}
\end{equation}
It reduces computing $(\M{I}-\M{X})^{-1}$ to a matrix-vector multiplication and
computing $(\M{I}-\M{X}^2)^{-1}$.
 As $\Nsp{\M{X}^2} < \Nsp{\M{X}}$ when $\Nsp{\M{X}} < 1$, $\M{I} - \M{X}^2$ is closer to $\M{I}$ than $\M{I}-\M{X}$.
Formally, it can be shown that after $\log \kappa$ steps,
  where $\kappa = \kappa\left(\M{I} - \M{X}\right)$,
  the problem becomes well-conditioned.
This low iteration count coupled with the low cost of matrix-vector multiplications
then gives a low-depth algorithm.

A major issue with extending this intuition to matrices is that
squares of matrices can be dense.
The density of matrices can be reduced using sparsification,
 which leads to approximation errors.
The factorization in Equation~\ref{eqn:factorDirect} is problematic for handling such errors.
Fact \ref{fact}.e states that such spectral approximations can only be propagated
when composed with other operators symmetrically.
This issue was addressed in~\cite{PengSpielman} using an alternate decomposition
that introduces an additional term of $\M{I} + \M{X}$ on the other side.
However, this introduces an additional problem for computing a good factorization:
the overall expression is no longer a product.

 Our result is based on directly symmetrizing the identity in Equation~\ref{eqn:factorDirect}.
We use the fact that polynomials of $\M{I}$ and $\M{X}$ commute to
move a half power of the first term onto the other side:
\begin{equation}
\left(\M{I}-\M{X}\right)^{-1} = \left(\M{I}+\M{X}\right)^{1/2}
 \left(\M{I}-\M{X}^2\right)^{-1}
 \left(\M{I}+\M{X}\right)^{1/2}
\label{eqn:factorSymmetric}.
\end{equation}
This has the additional advantage that the terms can be naturally incorporated
into the factorization $\tilde{\M{C}} \tilde{\M{C}}^\T$.
Of course, this leads to the issue of evaluating half powers, or more generally
$p\textsuperscript{th}$ powers.
Here our key idea, which we will
   discuss in depth in Section~\ref{sec:maclaurin}
  and Appendix~\ref{apd:maclaurin},
is to use Maclaurin expansions.
These are low-degree polynomials which give high quality
approximations when the matrices are well-conditioned.
We show that for any well-conditioned matrix $\M{M}$ with $\M{M} \approx_{\OO{1}} \M{I}$,
  and for any constant power $|p| \le 1$, there exists a (polylogarithmic)
  degree $t$ polynomial $T_{p, t}(\cdot)$ such that
\begin{align}
T_{p, t} \left( \M{M} \right) \approx_{\epsilon} \M{M}^{p}.
\end{align}

The condition that $\rho(\M{X}) < 1$ means the expression $\M{I} + \M{X}$
is almost well conditioned: the only problematic case is when $\M{X}$ has an
eigenvalue close to $-1$.
We resolve this issue by halving the coefficient in front of $\M{X}$, leading
to the following formula for (symmetric) factorization:
\begin{equation}\label{eqn:simpleidt}
\left(\M{I}-\M{X}\right)^{-1} = \left(\M{I}+\frac{1}{2}\M{X}\right)^{1/2}
 \left(\M{I}-\frac{1}{2}\M{X}-\frac{1}{2}\M{X}^2\right)^{-1}
 \left(\M{I}+\frac{1}{2}\M{X}\right)^{1/2},
\end{equation}
which upon introduction of $T_{p, t}(\cdot)$ becomes:
\begin{equation}\label{eqn:simpleidt}
\left(\M{I}-\M{X}\right)^{-1} \approx T_{\frac{1}{2}, t} \left(\M{I}+\frac{1}{2}\M{X}\right)
 \left(\M{I}-\frac{1}{2}\M{X}-\frac{1}{2}\M{X}^2\right)^{-1}
 T_{\frac{1}{2}, t} \left(\M{I}+\frac{1}{2}\M{X}\right).
\end{equation}

Note that the two terms $T_{\frac{1}{2}, t} \left(\M{I}+\frac{1}{2}\M{X}\right)$
are symmetric.
As our goal is to approximate $\left(\M{I}-\M{X}\right)^{-1}$ with $ \tilde{\M{C}} \tilde{\M{C}}^\T$,
we can place one term into $\tilde{\M{C}}$ and the other into $\tilde{\M{C}}^\T$.
This allows us to reduce the problem to one involving
$\left(\M{I}-\frac{1}{2}\M{X}-\frac{1}{2}\M{X}^2\right)^{-1}$.
Also, as $\M{X}^2$ may be a dense matrix, we need to sparsify it during intermediate steps.
This matrix is the average of $\M{I} - \M{X}$ and $\M{I} - \M{X}^2$.
 A key observation in~\cite{PengSpielman} is that the second matrix,
  $\M{I} - \M{X}^2$, corresponds to two-step random walks on $\M{I} - \M{X}$,
  and is therefore SDDM and can be efficiently sparsified,
  which implies that we can obtain a sparsifier for the average as well.

The above numerical reduction from the factorization
  of $\left(\M{I}-\M{X}\right)^{-1}$ to the factorization of
  $\left(\M{I}-\frac{1}{2}\M{X}-\frac{1}{2}\M{X}^2\right)^{-1}$
 leads to a chain of matrices akin to the sparse inverse
chain defined in~\cite{PengSpielman}.
We call it a {\em sparse inverse factor chain},
and prove its existence in Lemma \ref{crl:chain}.
\begin{definition}[Sparse Inverse Factor Chain]
For a sequence of approximation parameters
  $\V{ \epsilon} =\left(\epsilon_0,\ldots,\epsilon_d\right)$,
  an $\V{ \epsilon}$-sparse factor chain $(\M{X}_1,\ldots,\M{X}_d)$ for $\M{M} = \M{I} - \M{X}_0$
  satisfies the following conditions:
\begin{enumerate}
\item For $i=0,1,2,\dots,d-1$, $\M{I}-\M{X}_{i+1} \approx_{\epsilon_{i}} \M{I} - \frac{1}{2} \M{X}_{i} - \frac{1}{2} \M{X}_{i}^2$;
\item $\M{I} \approx_{\epsilon_d} \M{I} - \M{X}_d$.
\end{enumerate}
\end{definition}

We summarize our first technical lemma below.
In this lemma, and the theorems that follow, we will assume without further specifying
  that our SDDM matrix $\M{M}$ is $n \times n$ and has $m$ nonzero entries.
We also also use $\kappa$ to denote $\kappa(\M{M})$, and assume that $\M{M}$
 is expressed in the form $\M{M} = \M{I}-\M{X}_0$ for a non-negative matrix $\M{X}_0$.

\begin{lemma}[Efficient Chain Construction]\label{thm:const}
Given an SDDM matrix $\M{M} = \M{I}-\M{X}_0$
  and approximation parameter $\epsilon$, there exists 
  an $\V{\epsilon}$-sparse factor chain 
  $(\M{X}_1,\M{X}_2,\dots,\M{X}_d)$ for $\M{M}$
  with $d=\OO{\log(\kappa/\epsilon)}$
  such that (1) $\sum_{i=0}^d \epsilon_i \leq \epsilon$,
  and (2) the total number of nonzero entries in 
   $(\M{X}_1,\M{X}_2,\dots,\M{X}_d)$ is upper bounded
  by $\OO{n \log^{c}n \cdot \log^{3}(\kappa/\epsilon)
  / \epsilon^{2}}$, for a constant $c$.

Moreover, we can construct such an $\V{{\epsilon}}$-sparse factor chain
 in work
  $\OO{m\cdot \log^{c_1}n \cdot \log^{c_2}(\kappa/\epsilon)/\epsilon^4}$
  and parallel depth $\OO{\log^{c_3}n\cdot \log(\kappa/\epsilon)}$,
  for some constants $c_1,c_2$ and $c_3$.
\end{lemma}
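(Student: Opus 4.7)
The plan is to build the chain inductively: given $\M{X}_i$, set $\M{I} - \M{X}_{i+1}$ to be a spectral sparsifier of $\M{I} - \frac{1}{2}\M{X}_i - \frac{1}{2}\M{X}_i^2$ with distortion $\epsilon_i$, and stop once $\M{I} - \M{X}_d \approx_{\epsilon_d} \M{I}$. To bound the chain length, I will analyze the scalar map $\phi(\mu) = \mu(3-\mu)/2$ acting on the eigenvalues of $\M{I} - \M{X}_i$: if those eigenvalues lie in an interval $[a_i, 2-b_i] \subset (0,2)$, then the eigenvalues of $\M{I} - \frac{1}{2}\M{X}_i - \frac{1}{2}\M{X}_i^2$ lie in the image interval, whose distance to both endpoints of $[0,2]$ grows by at least a constant factor per step (near $0$, $\phi(\mu) \approx 3\mu/2$; near $2$, $2-\phi(2-\delta) = O(\delta^2)$). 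Starting from Lemma~\ref{lem:ixOk}'s guarantee $\rho(\M{X}_0) \le 1 - 1/(2\kappa)$ and folding in the $O(\epsilon_i)$ spectral drift of sparsification through Fact~\ref{fact}.c, a geometric argument shows $d = O(\log(\kappa/\epsilon))$ steps suffice.

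The main obstacle is executing each sparsification step efficiently: even though $\M{I} - \M{X}_i$ is sparse, $\M{X}_i^2$ is generically dense, so forming it explicitly is too expensive. The plan is to exploit the observation from \cite{PengSpielman} that $\M{I} - \M{X}_i^2$ is itself an SDDM matrix encoding two-step random walks on the support graph of $\M{I} - \M{X}_i$, and that a spectral sparsifier of $\M{I}-\M{X}_i^2$ can be produced in work nearly linear in the number of nonzeros of $\M{I}-\M{X}_i$ (and polylogarithmic depth) via random-walk sampling, without ever materializing $\M{X}_i^2$. Writing $\M{I} - \frac{1}{2}\M{X}_i - \frac{1}{2}\M{X}_i^2 = \frac{1}{2}(\M{I}-\M{X}_i) + \frac{1}{2}(\M{I}-\M{X}_i^2)$, applying Fact~\ref{fact}.b to add the already-sparse $\M{I}-\M{X}_i$ to the sparsifier of $\M{I}-\M{X}_i^2$, and then sparsifying the sum once more, yields $\M{I} - \M{X}_{i+1}$ with the required $\epsilon_i$-approximation.

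For the accounting, I set $\epsilon_i = \Theta(\epsilon / d)$ for every $i \le d$, so that Fact~\ref{fact}.c gives $\sum_{i} \epsilon_i \le \epsilon$. Standard spectral sparsifiers produce $O(n \log^c n / \epsilon_i^2)$ nonzeros per layer, making the total nonzero count $\sum_{i=1}^d O(n \log^c n / \epsilon_i^2) = O(n \log^c n \cdot d^3 / \epsilon^2) = O(n \log^c n \cdot \log^3(\kappa/\epsilon)/\epsilon^2)$, matching part (2). For work, the first step processes an $m$-nonzero matrix and costs $\tilde{O}(m / \epsilon_0^2)$; each subsequent step processes a matrix of size $\tilde{O}(n/\epsilon_{i-1}^2)$ at accuracy $\epsilon_i$, costing $\tilde{O}((n/\epsilon_{i-1}^2) \cdot (1/\epsilon_i^2))$. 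Summing the geometric series with $\epsilon_i = \Theta(\epsilon/d)$ and absorbing the sparsifier-construction polylogs into $\log^{c_1} n$ yields total work $O(m \log^{c_1} n \log^{c_2}(\kappa/\epsilon)/\epsilon^4)$; since each layer's sparsification runs in $O(\log^{c_3} n)$ depth and there are $d = O(\log(\kappa/\epsilon))$ layers, the overall parallel depth is $O(\log^{c_3} n \cdot \log(\kappa/\epsilon))$.
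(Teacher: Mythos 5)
Your construction is essentially the paper's: the per-step sparsification via the splitting $\M{I}-\Half\M{X}_i-\Half\M{X}_i^2 = \Half(\M{I}-\M{X}_i)+\Half(\M{I}-\M{X}_i^2)$, with Peng--Spielman two-step random-walk sampling of the square term followed by one more round of spectral sparsification, is exactly the sparsification lemma of Section~\ref{sec:chain}, and the eigenvalue-map analysis with $\epsilon_i=\Theta(\epsilon/d)$ is Lemmas~\ref{lem:eigIxx} and~\ref{crl:chain}; your nonzero/work/depth accounting matches as well. However, your scalar analysis of $\phi(\mu)=\mu(3-\mu)/2$ has two flaws worth fixing. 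First, the behavior near the top of the spectrum is miscomputed: $2-\phi(2-\delta)=1-\delta/2+\delta^2/2\ge 7/8$, not $O(\delta^2)$; taken literally, $O(\delta^2)$ would mean the distance to the endpoint $2$ \emph{shrinks}, contradicting the growth you assert. The truth is more favorable than either claim: $\phi(\mu)\le 9/8$ for all $\mu\in[0,2]$, so after a single step the upper end of the spectrum is bounded away from $2$ by a constant and never needs to be tracked again (this is why Lemma~\ref{lem:eigIxx} only parametrizes by the lower gap $\lambda$). Second, ``distance to both endpoints of $[0,2]$ grows by a constant factor'' can only certify that the chain becomes well-conditioned after $O(\log\kappa)$ steps; the termination condition $\M{I}-\M{X}_d\approx_{\epsilon_d}\M{I}$ further requires all eigenvalues to contract to within roughly $\epsilon$ of $1$. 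That needs the separate estimate $|1-\phi(\mu)|=|1-\mu|(2-\mu)/2\le\frac34|1-\mu|$ for $\mu\ge\Half$, which (after folding in the $\exp(\pm\epsilon_i)$ sparsification drift) is the second phase of the paper's Lemma~\ref{crl:chain} and accounts for the additional $O(\log(1/\epsilon))$ steps. With those two points repaired, your argument goes through and coincides with the paper's proof.
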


While we can use any nearly linear-time spectral sparsification algorithm 
  \cite{SpielmanTengSpectralSparsification,SpielmanSrivastava} 
  in our algorithm for Lemma \ref{thm:const}, 
  the precise exponents of the $\log{n}$ and $\log\kappa$ terms
  can be improved using recent works on combinatorial spectral 
  sparsification~\cite{Koutis14:arxiv,MillerPX13b:arxiv}.
Currently, the $c_i$s in Lemma \ref{thm:const} and the subsequent 
 Theorems \ref{thm:main} and \ref{thm:pthmain} are at most $5$.
Because of the likely possibility of further improvements in the near future,
  we only state them here as constants.

The length of the chain is logarithmic in both the condition number
  $\kappa$ and $\sum \epsilon_i$.
We will analyze the error propagation along this chain 
  in Section~\ref{sec:chain}.
Now we state our main result for computing inverse square root
  and factoring $p\textsuperscript{th}$ power in the following two theorems.

\begin{theorem}[Scalable Parallel Inverse Factorization]
\label{thm:main}
Given an SDDM matrix $\M{M} = \M{I}-\M{X}_0$
  and a precision parameter $\epsilon$,
  we can construct a representation of an $n \times n$ matrix $\tilde{\M{C}}$,
  such that $\tilde{\M{C}}\tilde{\M{C}}^{\T}\approx_{\epsilon}
  (\M{I}-\M{X}_0)^{-1}$
in work
  $\OO{m\cdot \log^{c_1}n \cdot\log^{c_2}\kappa}$
  and parallel depth $\OO{\log^{c_3}n\cdot \log \kappa}$ for some
  constants $c_1,c_2$, and $c_3$.

Moreover, for any $\V{x}\in \Reals{n}$,
  we can compute $\tilde{\M{C}} \V{x}$ in work 
$\OO{(m + n \cdot \log^{c}n\cdot \log^3\kappa)\cdot \log\log\kappa\cdot \log(1/\epsilon)}$ and depth
$\OO{\log n\cdot \log \kappa \cdot  \log\log\kappa\cdot \log(1/\epsilon)}$ for some other
constant $c$.
\end{theorem}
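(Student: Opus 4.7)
The approach is to invoke Lemma~\ref{thm:const} to construct an $\V{\epsilon}$-sparse inverse factor chain $(\M{X}_1,\ldots,\M{X}_d)$ of length $d = O(\log(\kappa/\epsilon))$ whose per-level approximation errors $\epsilon_i$ sum to at most $\epsilon/3$, and then to telescope the symmetric identity of Equation~\ref{eqn:simpleidt} down the chain. Combining the chain relation $\M{I}-\M{X}_{i+1}\approx_{\epsilon_i}\M{I}-\tfrac{1}{2}\M{X}_i-\tfrac{1}{2}\M{X}_i^2$ with the exact identity in Equation~\ref{eqn:simpleidt} via Fact~\ref{fact}(d,e) yields, for each level,
\[(\M{I}-\M{X}_i)^{-1}\;\approx_{\epsilon_i}\;\bigl(\M{I}+\tfrac{1}{2}\M{X}_i\bigr)^{1/2}(\M{I}-\M{X}_{i+1})^{-1}\bigl(\M{I}+\tfrac{1}{2}\M{X}_i\bigr)^{1/2}.\]
Unrolling this recurrence from $i=0$ to $i=d-1$ and terminating with $(\M{I}-\M{X}_d)^{-1}\approx_{\epsilon_d}\M{I}$ then gives, by the symmetric-composition rule of Fact~\ref{fact}(e) applied level by level,
\[(\M{I}-\M{X}_0)^{-1}\;\approx\;\prod_{i=0}^{d-1}\bigl(\M{I}+\tfrac{1}{2}\M{X}_i\bigr)^{1/2}\;\cdot\;\prod_{i=d-1}^{0}\bigl(\M{I}+\tfrac{1}{2}\M{X}_i\bigr)^{1/2},\]
with total spectral error bounded by $\sum_i \epsilon_i \le \epsilon/3$.

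\textbf{Sparse realization.} Since each $(\M{I}+\tfrac{1}{2}\M{X}_i)^{1/2}$ is in general dense, I will replace it by the Maclaurin polynomial $T_{1/2,t}(\M{I}+\tfrac{1}{2}\M{X}_i)$ of degree $t = O(\log\log\kappa\cdot\log(1/\epsilon))$ (Section~\ref{sec:maclaurin}). The hypothesis that $\M{I}+\tfrac{1}{2}\M{X}_i$ be constant-conditioned is automatic: the chain invariant $\rho(\M{X}_i)<1$ forces the spectrum of $\M{I}+\tfrac{1}{2}\M{X}_i$ into $[\tfrac{1}{2},\tfrac{3}{2}]$, and the chosen degree suffices to ensure $T_{1/2,t}(\M{I}+\tfrac{1}{2}\M{X}_i)^2 \approx_{\epsilon/(3d)} \M{I}+\tfrac{1}{2}\M{X}_i$. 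I then set
\[\tilde{\M{C}}\;:=\;T_{1/2,t}(\M{I}+\tfrac{1}{2}\M{X}_0)\,T_{1/2,t}(\M{I}+\tfrac{1}{2}\M{X}_1)\cdots T_{1/2,t}(\M{I}+\tfrac{1}{2}\M{X}_{d-1}).\]
Because each Maclaurin polynomial is symmetric, forming $\tilde{\M{C}}\tilde{\M{C}}^\T$ reverses the polynomial order on the right and nests the approximations in exactly the pattern required by Fact~\ref{fact}(e). The representation stored for $\tilde{\M{C}}$ is simply the chain together with the polynomial degree $t$.

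\textbf{Cost and main obstacle.} Construction cost and depth are inherited from Lemma~\ref{thm:const}. For evaluating $\tilde{\M{C}}\V{x}$, each factor $T_{1/2,t}(\M{I}+\tfrac{1}{2}\M{X}_i)$ is applied right-to-left via Horner-like evaluation at a cost of $t$ sparse mat-vec products against $\M{X}_i$; summing over the chain and using its total sparsity bound gives work $O((m+n\log^{c} n\log^3\kappa)\log\log\kappa\log(1/\epsilon))$, while the depth bound follows from the $d$ sequential layers, each of depth $O(t\log n)$. Summing the chain errors, the endpoint error, and the $d$ Maclaurin errors through Fact~\ref{fact}(a,c,e) then gives $\tilde{\M{C}}\tilde{\M{C}}^\T\approx_{\epsilon}(\M{I}-\M{X}_0)^{-1}$. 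The main obstacle is the error bookkeeping: each nested substitution must be shown to sit inside a symmetric sandwich so that Fact~\ref{fact}(e) actually applies, and the per-level Maclaurin precisions must be balanced so that the $d$ errors telescope additively rather than multiplicatively. Making this go through rests on maintaining the invariant $\rho(\M{X}_i)<1$ along the whole chain, which in turn guarantees that the Maclaurin approximation of the square-root factor remains accurate at every level.
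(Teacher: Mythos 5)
The telescoping part of your argument (unrolling Equation~\ref{eqn:simpleidt} down the chain, substituting Maclaurin polynomials, and composing errors symmetrically via Fact~\ref{fact}) is correct and is exactly what the paper does in Lemmas~\ref{lem:errOne} and~\ref{lem:error}. But your proof establishes the wrong complexity bounds: it proves the $p=-1$ case of Theorem~\ref{thm:pthmain}, not Theorem~\ref{thm:main}. By invoking Lemma~\ref{thm:const} with $\sum_i\epsilon_i\le\epsilon/3$, you inherit its construction cost $\OO{m\cdot\log^{c_1}n\cdot\log^{c_2}(\kappa/\epsilon)/\epsilon^4}$ and chain length $d=\OO{\log(\kappa/\epsilon)}$, i.e.\ a polynomial dependence on $1/\epsilon$ in the preprocessing work and an extra $\log(1/\epsilon)$ in every depth bound. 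Theorem~\ref{thm:main} claims construction work $\OO{m\cdot\log^{c_1}n\cdot\log^{c_2}\kappa}$ with \emph{no} $\epsilon$-dependence at all, and evaluation depth $\OO{\log n\cdot\log\kappa\cdot\log\log\kappa\cdot\log(1/\epsilon)}$ with $\log\kappa$ rather than $\log(\kappa/\epsilon)$. Your sentence ``construction cost and depth are inherited from Lemma~\ref{thm:const}'' silently drops the $1/\epsilon^4$ factor.

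The missing idea is the iterative refinement step of Theorem~\ref{thm:precond}. The paper builds the chain only to \emph{constant} accuracy ($\sum_i\epsilon_i=1$, hence $d=\OO{\log\kappa}$, per-level Maclaurin degree $\OO{\log\log\kappa}$, and construction work free of $\epsilon$), obtaining a crude operator $\M{Z}$ with $\M{Z}\M{Z}^\T\approx_{\OO{1}}\M{M}^{-1}$. Since then $\M{Z}^\T\M{M}\M{Z}\approx_{\OO{1}}\M{I}$ is well-conditioned, a single degree-$\OO{\log(1/\epsilon)}$ Maclaurin polynomial applied to $\M{Z}^\T\M{M}\M{Z}$ yields $\tilde{\M{C}}=\M{Z}\,T_{-\Half,\OO{\log(1/\epsilon)}}(\M{Z}^\T\M{M}\M{Z})$ with $\tilde{\M{C}}\tilde{\M{C}}^\T\approx_\epsilon\M{M}^{-1}$. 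This pushes the entire $\epsilon$-dependence into a multiplicative $\log(1/\epsilon)$ factor at evaluation time and leaves the preprocessing independent of $\epsilon$. Without this step you cannot reach the stated bounds; with your parameters you should instead be citing (or proving) Theorem~\ref{thm:pthmain} at $p=-1$. A minor additional point: with per-level precision $\epsilon/(3d)$ the required polynomial degree is $\OO{\log(d/\epsilon)}=\OO{\log\log\kappa+\log(1/\epsilon)}$, a sum rather than the product $\OO{\log\log\kappa\cdot\log(1/\epsilon)}$ you wrote; the product form in the theorem's evaluation cost arises only through the refinement composition, not from a single deeper Maclaurin expansion.
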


\begin{proof}
For computing the inverse square root factor ($p=-1$),
  we first use Lemma \ref{thm:const} with $\sum \epsilon_i = 1$
  to construct a crude sparse inverse factor chain
  with length $d = \OO{\log\kappa}$.
Then for each $i$ between $0$ and $d-1$, we use Maclaurin expansions
 from Lemma~\ref{lem:taylor} to obtain $\OO{\log\log\kappa}$-degree
   polynomials that approximate
  $(\M{I} + \Half \M{X}_i)^{1/2}$ to precision $\epsilon_i = \Theta(1/d)$.
Finally we apply a refinement procedure from Theorem \ref{thm:precond}
  to reduce error to the final target precision $\epsilon$, 
  using $O(\log(1/\epsilon))$ multiplication with the crude approximator.
The number of operations needed to compute $\tilde{\M{C}} \V{x}$
  follows the complexity of Horner's method for polynomial evaluation.
\end{proof}

 The factorization formula from Equation~\ref{eqn:simpleidt}
  can be extended to any $p\in [-1,1]$.
  This allows us to construct an $\epsilon$-approximate factor of
  $\M{M}^p$ for general $p\in [-1,1]$, with work 
  nearly-linear in $m/ \epsilon^{4}$.
It remains open if this dependency on $\epsilon$
can be reduced to $\log (1/\epsilon)$ for the general case.

\begin{theorem}[Factoring $\M{M}^p$]
\label{thm:pthmain}
For any $p \in [-1, 1]$, 
  and an SDDM matrix $\M{M} = \M{I}-\M{X}_0$, 
  we can construct a representation of an $n \times n$ matrix $\tilde{\M{C}}$
  such that $\tilde{\M{C}}\tilde{\M{C}}^{\T}\approx_{\epsilon} \M{M}^{p}$
in work
  $\OO{m\cdot \log^{c_1}n\cdot \log^{c_2}(\kappa/\epsilon)
  /\epsilon^4 }$ and depth $\OO{\log^{c_3}n\cdot
  \log(\kappa/\epsilon)}$ for some constants $c_1,c_2$, and $c_3$.

Moreover, for any $\V{x}\in \Reals{n}$,
  we can compute   $\tilde{\M{C}} \V{x}$ 
in work $\OO{(m +
  n\cdot\log^{c}n\cdot \log^3(\kappa/\epsilon)/ \epsilon^{2} )\cdot
  \log(\log(\kappa) / \epsilon) }$ and depth
  $\OO{\log n\cdot \log(\kappa/\epsilon)\cdot \log(\log(\kappa) /
  \epsilon)}$ for some other constant $c$.
\end{theorem}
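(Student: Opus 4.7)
The plan is to extend the symmetric factorization identity (\ref{eqn:simpleidt}) from $p=-1$ to arbitrary $p \in [-1,1]$. Since $\M{I}+\frac{1}{2}\M{X}$, $\M{I}-\M{X}$, and $\M{I}-\frac{1}{2}\M{X}-\frac{1}{2}\M{X}^2$ are polynomials in $\M{X}$ they commute, and the algebraic identity $(\M{I}+\frac{1}{2}\M{X})(\M{I}-\M{X}) = \M{I}-\frac{1}{2}\M{X}-\frac{1}{2}\M{X}^2$, after raising to the $p$-th power and splitting the factor $(\M{I}+\frac{1}{2}\M{X})^{-p}$ symmetrically, yields
\begin{align*}
(\M{I}-\M{X})^p \;=\; \left(\M{I}+\tfrac{1}{2}\M{X}\right)^{-p/2}\left(\M{I}-\tfrac{1}{2}\M{X}-\tfrac{1}{2}\M{X}^2\right)^p \left(\M{I}+\tfrac{1}{2}\M{X}\right)^{-p/2}.
\end{align*}
Unrolled recursively along a sparse factor chain $(\M{X}_1,\ldots,\M{X}_d)$, the chain relations $\M{I}-\frac{1}{2}\M{X}_i-\frac{1}{2}\M{X}_i^2 \approx_{\epsilon_i} \M{I}-\M{X}_{i+1}$ and $\M{I}-\M{X}_d \approx_{\epsilon_d} \M{I}$ collapse the middle factor at every level and produce a symmetric product approximation of $\M{M}^p$.

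I would then invoke Lemma~\ref{thm:const} with total error budget $\sum_{i=0}^{d}\epsilon_i = \Theta(\epsilon)$; this produces a chain of length $d = \OO{\log(\kappa/\epsilon)}$ whose per-level sparsity and construction cost match the theorem's bounds, with the $\epsilon^{-4}$ factor arising from running spectral sparsification at precision $\Theta(\epsilon/d)$. Unlike the $p=-1$ case, no preconditioning refinement is available for general $p$, so the chain must be built at the full target precision up front. Each exact factor $(\M{I}+\frac{1}{2}\M{X}_i)^{-p/2}$ is then replaced by its Maclaurin polynomial $T_{-p/2,t_i}(\M{I}+\frac{1}{2}\M{X}_i)$; since $\M{X}_i$ is nonnegative with $\rho(\M{X}_i)<1$, the spectrum of $\M{I}+\frac{1}{2}\M{X}_i$ lies in $(\tfrac{1}{2},\tfrac{3}{2})$, and Lemma~\ref{lem:taylor} delivers $\Theta(\epsilon/d)$-accuracy at polynomial degree $t_i = \OO{\log(\log(\kappa)/\epsilon)}$. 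Setting
\begin{align*}
\tilde{\M{C}} \;=\; T_{-p/2,t_0}\!\left(\M{I}+\tfrac{1}{2}\M{X}_0\right) T_{-p/2,t_1}\!\left(\M{I}+\tfrac{1}{2}\M{X}_1\right) \cdots T_{-p/2,t_{d-1}}\!\left(\M{I}+\tfrac{1}{2}\M{X}_{d-1}\right)
\end{align*}
gives the desired $n\times n$ operator.

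The main technical obstacle is controlling error accumulation across the $d$ nested sandwichings. Working from the inside out, the chain error at level $i$ propagates across the $p$-th power with a factor of at most $|p|\le 1$ via the Loewner--Heinz inequality together with Fact~\ref{fact}.d, while both chain errors and Maclaurin errors at level $i$ are pushed through the outer sandwich using Fact~\ref{fact}.e, exploiting that $T_{-p/2,t_i}(\M{I}+\frac{1}{2}\M{X}_i)$ and $(\M{I}+\frac{1}{2}\M{X}_i)^{-p/2}$ commute at each fixed level. Combining the level-wise contributions via Fact~\ref{fact}.a and \ref{fact}.c yields a total error of $\OO{\sum_i\epsilon_i}=\OO{\epsilon}$, giving the stated $\approx_\epsilon$ guarantee after a constant rescaling of the budget. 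The complexity bounds then follow: construction is dominated by Lemma~\ref{thm:const}, and evaluating $\tilde{\M{C}}\V{x}$ requires $d$ Horner passes of degree $t_i$ over matrices with total sparsity $\OO{n\log^c n\log^3(\kappa/\epsilon)/\epsilon^2}$, matching the claimed work and depth.
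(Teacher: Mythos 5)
Your proposal is correct and follows essentially the same route as the paper: build the chain via Lemma~\ref{thm:const} with total budget $\Theta(\epsilon)$ (no refinement being available for general $p$), replace each $(\M{I}+\frac{1}{2}\M{X}_i)^{-p/2}$ by a degree-$\OO{\log(\log(\kappa)/\epsilon)}$ Maclaurin polynomial at precision $\Theta(\epsilon/d)$, and take $\tilde{\M{C}}$ to be the resulting product, which is exactly the operator $\M{Z}_{p,0}$ of Lemma~\ref{lem:error}. Your error-propagation argument via the Loewner--Heinz step (Lemma~\ref{lem:pthrootapprox}) and Fact~\ref{fact}.e reproduces Lemmas~\ref{lem:errOne} and~\ref{lem:error}, so the write-up is sound and, if anything, more explicit than the paper's own proof.
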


\begin{proof}
Because the refinement procedure in Theorem \ref{thm:precond}
  does not apply to the general $p\textsuperscript{th}$ power case,
  we need to have a longer chain and higher order Maclaurin expansion here.
We first use Lemma \ref{thm:const} with $\sum \epsilon_i = \epsilon/2$
  to construct a matrix chain with length
  $d = \OO{\log(\kappa/\epsilon)}$.
Then for $i \le d-1$, we use Lemma \ref{lem:taylor} to obtain
  $\OO{\log(\log\kappa/\epsilon)}$-degree polynomials to approximate
  $(\M{I} + \Half \M{X}_i)^{-p/2}$
  to precision $\epsilon_i = \Theta(\epsilon/d)$.
The number of operations needed to compute $\tilde{\M{C}} \V{x}$
  follows from the length of the chain and the order of Maclaurin expansion.
\end{proof}

\section{Maclaurin Series Expansion}
\label{sec:maclaurin}

In this section, we show how to approximate 
a matrix of the form
$\left(\M{I}+\frac{1}{2}\M{X}\right)^p$.
 Because $\rho\left(\M{X}\right) < 1$, $\M{I}+\frac{1}{2}\M{X}$ is
well-conditioned.
Thus, we can approximate its
$p\textsuperscript{th}$ power to any approximation parameter $\epsilon
> 0$ using an $\OO{\log(1/\epsilon)}$-order Maclaurin expansion,
i.e., a low degree polynomial of $\M{X}$.  Moreover, since the
approximation is a polynomial of $\M{X}$, the eigenbasis is preserved.

We start with the following lemma on the residue of Maclaurin expansion,
  which we prove in Appendix~\ref{apd:maclaurin}.

\begin{restatable}[Maclaurin Series]{lemma}{restateTaylor}
\label{lem:taylorscalar}
 Fix $p \in [-1, 1]$ and $\delta \in (0, 1)$, for any error tolerance $\epsilon > 0$, there exists a $t$-degree polynomial $T_{p,t}(\cdot)$ with $t \le \frac{\log(1/(\epsilon (1-\delta)^2)}{1-\delta}$, such that for all $\lambda \in [1-\delta, 1+\delta]$,
\begin{align}
\exp(-\epsilon) \lambda^{p} \leq T_{p,t}(\lambda) \leq  \exp(\epsilon) \lambda^{p}.
\end{align}
\end{restatable}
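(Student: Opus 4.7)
The plan is to take $T_{p,t}$ to be the Taylor polynomial of $\lambda^p$ expanded about $\lambda = 1$, namely
$$
T_{p,t}(\lambda) = \sum_{k=0}^{t} \binom{p}{k}(\lambda-1)^k,
$$
and to convert a tail bound on the residue $R_t(\lambda) := \lambda^p - T_{p,t}(\lambda)$ into the desired multiplicative guarantee $T_{p,t}(\lambda)/\lambda^p \in [\exp(-\epsilon), \exp(\epsilon)]$. With $u = \lambda-1$ so that $|u| \le \delta < 1$, this is a convergent power series whose tail is a geometric-type sum, which makes the whole argument a matter of estimating coefficients and a lower bound on $\lambda^p$.

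The first step, and the main technical ingredient, is the uniform coefficient bound $|\binom{p}{k}| \le 1$ for every $p \in [-1,1]$ and every $k \ge 0$. This is where the assumption $|p|\le 1$ is used: factor by factor in the product $p(p-1)(p-2)\cdots(p-k+1)$, the triangle inequality gives $|p-j| \le |p|+j \le j+1$, so the numerator is at most $1 \cdot 2 \cdots k = k!$, cancelling the $k!$ in the denominator. With this in hand, the residue is controlled by a geometric tail,
$$
|R_t(\lambda)| \;\le\; \sum_{k=t+1}^{\infty} \delta^{k} \;=\; \frac{\delta^{t+1}}{1-\delta}.
$$

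Next I would give a uniform lower bound $\lambda^p \ge 1-\delta$ on the interval $[1-\delta,1+\delta]$, splitting into cases: for $p \ge 0$ the minimum is $(1-\delta)^p \ge 1-\delta$ since $p\le 1$, and for $p < 0$ the minimum is $(1+\delta)^p \ge 1/(1+\delta) \ge 1-\delta$. Combined with the residue bound this gives the relative error estimate
$$
\frac{|R_t(\lambda)|}{\lambda^p} \;\le\; \frac{\delta^{t+1}}{(1-\delta)^2}.
$$

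Finally, to extract the multiplicative form I would require this relative error to be at most $1-\exp(-\epsilon)$ (which is the tighter of the two one-sided bounds implied by $\exp(\pm\epsilon)$ and is at least $\epsilon/2$ for $\epsilon \in (0,1]$); equivalently $\delta^{t+1} \le \epsilon(1-\delta)^2/2$. Taking logarithms gives $(t+1)\log(1/\delta) \ge \log(1/(\epsilon(1-\delta)^2)) - O(1)$, and the elementary inequality $\log(1/\delta) \ge 1-\delta$ converts this into the claimed bound $t \le \log(1/(\epsilon(1-\delta)^2))/(1-\delta)$, absorbing the additive $\log 2$ either into constants or by halving $\epsilon$ at the outset. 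The only nontrivial step is the clean coefficient estimate; after that it is a straightforward geometric-tail computation combined with the $\lambda^p \ge 1-\delta$ lower bound.
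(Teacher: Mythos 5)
Your proof is correct and follows essentially the same route as the paper's: the coefficient bound $\bigl|\binom{p}{k}\bigr|\le 1$ (the paper proves it by the induction $|a_{k+1}|=\bigl|\tfrac{p-k}{k+1}\bigr|\,|a_k|\le|a_k|$, which is your factor-by-factor estimate), the geometric tail bound $\delta^{t+1}/(1-\delta)$, the lower bound $\lambda^p\ge 1-\delta$, and the final conversion to a multiplicative guarantee. If anything you are slightly more careful than the paper at the last step, since $1-\epsilon\le\exp(-\epsilon)$ means the relative-error target should really be $1-\exp(-\epsilon)$ rather than $\epsilon$; you handle this correctly at the cost of an absorbed $\log 2$.
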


Now we present one of our key lemmas, which provides the backbone for
bounding overall error of our sparse factor chain. We claim that
substituting $\left(\M{I}+\frac{1}{2}\M{X}\right)^p$ with its
Maclaurin expansion into Equation \ref{eqn:simpleidt} preserves the
multiplicative approximation. We prove the lemma for $p \in [-1, 1]$
which later we use for computing the $p\textsuperscript{th}$ power of
SDDM matrices. For computing the inverse square-root factor, we only
need the special case when $p=-1$.

\begin{lemma}[Factorization with Matrix Polynomials]\label{lem:taylor}
 Let $\M{X}$ be a symmetric matrix with $\rho\left(\M{X}\right) < 1$ and fix $p \in [-1, 1]$.
For any approximation parameter $\epsilon>0$, there exists a $t$-degree polynomial $T_{p,t}(\cdot)$ with $t = \OO{\log (1/\epsilon)}$, such that
\begin{align}\label{eq:idanyp}
\left(\M{I}-\M{X}\right)^{p} \approx_{\epsilon} T_{-\frac{p}{2},t}\left(\M{I}+\Half\M{X}\right)  \left( \M{I}-\frac{1}{2}\M{X}-\frac{1}{2}\M{X}^2\right)^{p} T_{-\frac{p}{2},t}\left(\M{I}+\Half\M{X}\right).
\end{align}
\end{lemma}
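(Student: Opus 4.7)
The plan is to reduce the operator statement to a scalar inequality by exploiting the fact that every matrix appearing on both sides is a polynomial in $\M{X}$ and hence all the relevant factors commute and share an eigenbasis.

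First, I would verify the exact commutative identity
\begin{align}
\left(\M{I}-\M{X}\right)^{p}
= \left(\M{I}+\Half\M{X}\right)^{-p/2}
  \left(\M{I}-\Half\M{X}-\Half\M{X}^2\right)^{p}
  \left(\M{I}+\Half\M{X}\right)^{-p/2}.
\end{align}
At the scalar level this is just $(1-\lambda)(1+\lambda/2) = 1 - \lambda/2 - \lambda^2/2$, raised to the $p$-th power and split symmetrically; because each factor is a matrix function of $\M{X}$, the factors commute and the identity lifts verbatim to operators. Note that $\rho(\M{X})<1$ forces the eigenvalues of $\M{I}+\Half\M{X}$ into $(1/2,3/2)$ and the eigenvalues of $\M{I}-\Half\M{X}-\Half\M{X}^2 = \Half(1-\lambda)(2+\lambda)|_{\lambda=\text{eig}(\M{X})}$ to be positive, so all fractional powers are well defined and PSD.

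Next I would invoke Lemma~\ref{lem:taylorscalar} with exponent $-p/2 \in [-\Half,\Half] \subseteq [-1,1]$, interval half-width $\delta = \Half$, and tolerance $\epsilon/2$. This supplies a polynomial $T_{-p/2,t}$ of degree $t = O(\log(1/\epsilon))$ such that
\begin{align}
\exp(-\epsilon/2)\,\lambda^{-p/2} \;\le\; T_{-p/2,t}(\lambda) \;\le\; \exp(\epsilon/2)\,\lambda^{-p/2}
\qquad \text{for all } \lambda \in [1/2,\,3/2].
\end{align}
Since the eigenvalues of $\M{I}+\Half\M{X}$ all lie in this range, plugging in $\M{I}+\Half\M{X}$ and reading the inequality in the common eigenbasis yields $T_{-p/2,t}(\M{I}+\Half\M{X}) \approx_{\epsilon/2} (\M{I}+\Half\M{X})^{-p/2}$ in the Loewner sense.

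To finish, I would diagonalize $\M{X}$ simultaneously with the remaining factor $\M{B} = (\M{I}-\Half\M{X}-\Half\M{X}^2)^{p}$ (which commutes with everything in sight and is PSD by the observation above). In the eigenbasis, both sides of the target approximation are diagonal, and for each eigenvalue the ratio RHS/LHS equals $T_{-p/2,t}(1+\lambda_i/2)^{2}\,(1+\lambda_i/2)^{p}$, which sits in $[\exp(-\epsilon),\exp(\epsilon)]$ by squaring the scalar Maclaurin bound; equivalently, one may apply Fact~\ref{fact}.e with $\V{V} = \M{B}^{1/2}$ (well-defined since $\M{B} \succeq 0$ and $\M{B}^{1/2}$ is itself a polynomial in $\M{X}$, hence commutes with both $T_{-p/2,t}(\M{I}+\Half\M{X})$ and its exact counterpart) to conjugate the $\approx_{\epsilon/2}$ relation and square it, yielding the desired $\approx_\epsilon$ bound.

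The only genuinely delicate step is keeping track of the doubling of the error when the Maclaurin approximation is applied on both the left and the right; everything else is bookkeeping enabled by the fact that all matrices involved are simultaneously diagonalizable. Writing the conjugation through $\M{B}^{1/2}$ and citing Fact~\ref{fact}.e makes this transparent without needing a direct matrix-by-matrix argument.
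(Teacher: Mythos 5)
Your proposal is correct and follows essentially the same route as the paper: establish the exact commutative identity $(\M{I}-\M{X})^{p} = (\M{I}+\Half\M{X})^{-p/2}(\M{I}-\Half\M{X}-\Half\M{X}^2)^{p}(\M{I}+\Half\M{X})^{-p/2}$, invoke Lemma~\ref{lem:taylorscalar} with $\delta=1/2$, power $-p/2$, and tolerance $\epsilon/2$, and compare both sides eigenvalue-by-eigenvalue in the common eigenbasis, squaring the scalar bound to account for the two polynomial factors. The extra remark about conjugating by $\M{B}^{1/2}$ via Fact~\ref{fact}.e is a harmless repackaging of the same diagonalization argument.
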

\begin{proof}
Take the spectral decomposition of $\M{X}$
\begin{align}
 \M{X} = \sum_i \lambda_i \V{u}_i \V{u}_i^{\T},
\end{align}
where $\lambda_i$ are the eigenvalues of $\M{X}$.
Then we can represent the left hand side of Equation \ref{eq:idanyp} as 
\begin{align}\label{eqn:taylor_proof_left}
 \begin{split}
  \left(\M{I}-\M{X}\right)^{p} &= 
  \left(\M{I}+\frac{1}{2}\M{X}\right)^{-p/2} 
  \left(\M{I}-\frac{1}{2}\M{X}-\frac{1}{2}\M{X}^2\right)^{p} 
  \left(\M{I}+\frac{1}{2}\M{X}\right)^{-p/2}
  \\
  &= \sum_i \left(1-\Half\lambda_i -\Half\lambda_i^2 \right) \left( 1+\Half \lambda_i\right)^{-p} \V{u}_i \V{u}_i^{\T},
 \end{split}
\end{align}
and the right hand side of Equation \ref{eq:idanyp} as 
\begin{align}\label{eqn:taylor_proof_right}
 \begin{split}
  T_{-\frac{p}{2},t}\left(\M{I}+\Half\M{X}\right) 
  \left(\M{I}-\frac{1}{2}\M{X}-\frac{1}{2}\M{X}^2\right)^{p}  
  T_{-\frac{p}{2},t}\left(\M{I}+\Half\M{X}\right)
  \hspace{1in} \\
  = \sum_i 
  \left(1-\Half\lambda_i -\Half\lambda_i^2 \right)
  \left( T_{-\frac{p}{2},t} \left(1+\Half \lambda_i \right) \right)^2
  \V{u}_i \V{u}_i^{\T}.
 \end{split}
\end{align}
Because $\rho(\M{X})<1$, for all $i$ we have $\left(1-\Half\lambda_i -\Half\lambda_i^2 \right)>0$.
By invoking Lemma \ref{lem:taylorscalar} with $\delta=1/2$, error tolerance $\epsilon/2$ and power $-p/2$, we can obtain the polynomial $T_{-\Half p,t}(\cdot)$ with the following property,
\begin{align}
\exp(-\epsilon) \left( 1+\Half \lambda_i\right)^{-p}
\leq
\left( T_{-\Half p,t}\left( 1+\Half \lambda_i\right) \right)^{2}
\leq
\exp(\epsilon) \left( 1+\Half \lambda_i\right)^{-p}.
\end{align}

To conclude the proof, we use this inequality inside 
  Equation \ref{eqn:taylor_proof_right} and compare Equation \ref{eqn:taylor_proof_left} and \ref{eqn:taylor_proof_right}.

\end{proof}

\section{Sparse Factor Chains: Construction and Analysis}
\label{sec:chain}

Our basic algorithm constructs 
  an approximate factor of
 $(\M{I}-\M{X})^p$, for any $p \in [-1, 1]$.
Using the factorization with Maclaurin matrix polynomials
  given by Lemma \ref{lem:taylor},
  we extend the framework of Peng-Spielman \cite{PengSpielman} for
  parallel approximation of matrix inverse
  to SDDM factorization:
Our algorithm uses the following identity
\begin{align}
\left(\M{I}-\M{X}\right)^{p} 
=
\left(\M{I}+\Half\M{X}\right)^{-p/2} 
\left(\M{I}-\Half\M{X}-\Half\M{X}^2\right)^{p} 
\left(\M{I}+\Half\M{X}\right)^{-p/2}.
\end{align}
and the fact that we can approximate $\left(\M{I}+\Half\M{X}\right)^{-p/2}$
  with its Maclaurin series expansion (Lemma \ref{lem:taylor}).
It thus numerically reduces the factorization problem of
$\left(\M{I}-\M{X}\right)^{p}$ to that of $\left(\M{I}-\Half\M{X}-\Half\M{X}^2\right)^{p}$.

The key to efficient applications of this iterative reduction
  is the spectral sparsification of $\M{I}-\Half\M{X}-\Half\M{X}^2$,
so that our matrix polynomials only use sparse matrices.
In particular, we start with $\M{I}-\M{X}_0$, and 
  at step $i$  we sparsify $\M{I}-\Half\M{X}_i-\Half\M{X}_i^2$
  to obtain its spectral approximation $\M{I}-\M{X}_{i+1}$,
  and then proceed to the next iteration until $\rho(\M{X}_d)$ is small enough,
  at which point we approximate $\M{I}-\M{X}_d$ with $\M{I}$.
 
\begin{lemma}[Sparsification of  $\M{I}-\Half \M{X} - \Half \M{X}^2$]
 Given an SDDM matrix $\M{M} = \M{I}-\M{X}$
  and a precision parameter $\epsilon$,
  we can construct an $n \times n$ non-negative matrix $\SM{X}$
  in work $\OO{m \cdot \log^{c_1}n / \epsilon^4}$
  and parallel depth $\OO{\log^{c_3}n}$, such that
  (1) $\M{I}-\SM{X}\approx_{\epsilon} \M{I}-\Half \M{X} - \Half \M{X}^2$,
  and (2) the number of non-zero entries in $\SM{X}$
  is upper bounded by $\OO{n \cdot \log^{c}n / \epsilon^{2}}$
  for some modest constants $c_1,c_3$ and $c$.
\end{lemma}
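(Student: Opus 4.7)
The plan is to exploit the splitting
\[
\M{I} - \Half \M{X} - \Half \M{X}^2 \;=\; \Half (\M{I} - \M{X}) \;+\; \Half (\M{I} - \M{X}^2),
\]
so the only genuinely dense ingredient to be sparsified is $\M{I} - \M{X}^2$; the summand $\M{I}-\M{X}$ already has only $m$ nonzeros. If I can produce an $\epsilon$-spectral approximation $\tilde{\M{L}}$ of $\M{I} - \M{X}^2$, then Fact~\ref{fact}.a will give $\Half(\M{I}-\M{X}) + \Half \tilde{\M{L}} \approx_\epsilon \Half(\M{I}-\M{X}) + \Half(\M{I}-\M{X}^2)$, after which I simply set $\SM{X} := \M{I} - \Half(\M{I}-\M{X}) - \Half \tilde{\M{L}}$.

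The substantive step is then the sparsification of $\M{I} - \M{X}^2$. I would first verify this matrix is itself SDDM: entry-wise nonnegativity of $\M{X}$ gives $\M{X}^2 \ge 0$, while $\M{X}\V{1}\le \V{1}$ (which follows from $\M{I}-\M{X}$ being SDDM) yields $\M{X}^2\V{1}\le \M{X}\V{1}\le \V{1}$, so the diagonal of $\M{I}-\M{X}^2$ dominates. Following the structural observation of~\cite{PengSpielman}, I would then write $\M{I} - \M{X}^2$ as a sum indexed by ``middle vertices'' $v$ of contributions encoding length-two walks through $v$: each walk $(u,v,w)$ contributes a positively weighted rank-one term proportional to $(\V{e}_u-\V{e}_w)(\V{e}_u-\V{e}_w)^{\T}$, with a leftover diagonal term absorbing any row-sum slack. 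Although the number of such terms can be $\Omega(\sum_v \deg(v)^2) \gg m$, the representation is implicit: one can draw a single sample from it in $O(1)$ amortized work after $O(m)$ preprocessing.

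I would then feed this implicit rank-one decomposition into a nearly-linear-work parallel spectral sparsification routine --- for instance an effective-resistance sampler as in~\cite{SpielmanSrivastava}, or the parallel combinatorial variants of~\cite{Koutis14:arxiv,MillerPX13b:arxiv} --- to extract $\tilde{\M{L}}$ with $\OO{n \log^c n / \epsilon^2}$ nonzeros satisfying $\tilde{\M{L}} \approx_\epsilon \M{I}-\M{X}^2$. Crucially, because these schemes only rescale positive rank-one outer products, the off-diagonals of $\tilde{\M{L}}$ remain nonpositive; hence the final $\SM{X}$ is entry-wise nonnegative, as required. The depth bound $\OO{\log^{c_3}n}$ and nonzero count $\OO{n \log^{c}n / \epsilon^2}$ inherit directly from the sparsifier's guarantees and the $O(m)$ contribution of the already-sparse $\M{I}-\M{X}$ term.

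The hard part is squeezing the work down to $\OO{m \cdot \log^{c_1} n / \epsilon^4}$ while operating \emph{only} on the implicit representation of $\M{I}-\M{X}^2$: explicitly materializing the matrix would already cost $\Omega(\sum_v \deg(v)^2)$ and destroy the bound. The $\epsilon^{-4}$ factor reflects composing the $\epsilon^{-2}$ cost of Johnson--Lindenstrauss based effective-resistance estimation with the $\epsilon^{-2}$ sample count of spectral sparsifiers; any improvement in either ingredient would propagate into a smaller exponent, consistent with the footnote on the exponents in the paper.
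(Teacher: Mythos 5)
Your high-level plan coincides with the paper's: split $\M{I}-\Half\M{X}-\Half\M{X}^2$ as $\Half(\M{I}-\M{X})+\Half(\M{I}-\M{X}^2)$ and sparsify the two-step-walk matrix $\M{I}-\M{X}^2$ without materializing it. But there are two concrete gaps. First, you sparsify only the $\M{I}-\M{X}^2$ summand and then add back $\Half(\M{I}-\M{X})$ verbatim, so your $\SM{X}$ carries $\Omega(m)$ nonzeros on top of the $\OO{n\log^c n/\epsilon^2}$ from $\tilde{\M{L}}$; this fails the claimed bound whenever $m \gg n\log^c n/\epsilon^2$ (your own sentence acknowledging an ``$O(m)$ contribution'' contradicts the bound you assert in the same breath). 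The paper closes this by a \emph{second}, general-purpose spectral sparsification applied to the entire sum $\Half(\M{I}-\M{X})+\Half(\M{I}-\M{X}')$, and that second stage --- an $\OO{m\log n/\epsilon^2}$-entry input fed into an $\epsilon^{-2}$-overhead sparsifier --- is where the $\epsilon^{-4}$ in the work bound actually originates.

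Second, your mechanism for sparsifying the implicit representation is underspecified at the decisive point. Effective-resistance (leverage-score) sampling in the style of Spielman--Srivastava needs leverage-score estimates for the $\Omega(\sum_v \deg(v)^2)$ implicit edges, and computing those estimates requires solving linear systems in (a good approximation of) $\M{I}-\M{X}^2$ itself --- precisely the object you are trying to construct. Being able to draw a two-step walk in $O(1)$ amortized time does not resolve this circularity. The paper instead invokes the dedicated procedure of Section 6 of Peng--Spielman: an oblivious, row-by-row sampling of two-step random walks with explicitly computable probabilities and a direct concentration analysis, producing an \emph{explicit} intermediate $\M{X}'$ with $\OO{m\log n/\epsilon^2}$ nonzeros satisfying $\M{I}-\M{X}'\approx_{\epsilon/2}\M{I}-\M{X}^2$; only this explicit matrix is handed to a black-box sparsifier. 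Finally, your nonnegativity argument controls the off-diagonals of $\SM{X}$ but not its diagonal: a sparsifier that upweights sampled rank-one terms can push $\tilde{L}_{ii}$ above $1$ and make $\SM{X}_{ii}$ negative, which is exactly the ``maintain $\M{I}$'' subtlety the paper handles via the splitting technique from Section 3.3 of Peng's thesis.
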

\begin{proof}
We split $\M{I}-\Half\M{X}-\Half\M{X}^2$
  into $\Half\left(\M{I}-\M{X}\right) + \Half\left(\M{I}-\M{X}^2\right)$
  and first apply the Peng-Spielman sparsification
  method (Section 6 of \cite{PengSpielman})
  to sparsify $\M{I}-\M{X}^2$.
This method involves $n$ independent sparsification tasks, 
  one per row of $\M{X}$, by sampling two-step random walks on $\M{X}$.
In other words, it sparsifies $\M{I}-\M{X}^2$
  based on $\M{X}$, without actually calculating $\M{X}^2$.
It constructs a sparse matrix $\M{X}'$   
  such that $\M{I}-\M{X}' \approx_{\epsilon/2} \M{I}-\M{X}^2$
 with  work  $\OO{m \cdot \log^{2} n / \epsilon^2 }$
  and parallel depth is $\OO{\log n}$ \cite{PengSpielman}.
Moreover, the number of nonzero entries in $\M{X}'$ upper bounded by
  $\OO{m\cdot \log n / \epsilon^2}$.

Next, we sparsify $\Half(\M{I}-\M{X})+\Half(\M{I}-\M{X}')$
  by using a spectral sparsification 
  algorithm \cite{Koutis14:arxiv,MillerPX13b:arxiv,SpielmanSrivastava,SpielmanTengSpectralSparsification}
  and compute a matrix $\SM{X}$ that satisfies
\begin{align}
  \M{I}-\SM{X}
  \approx_{\epsilon/2} \Half(\M{I}-\M{X})+\Half(\M{I}-\M{X}')
  \approx_{\epsilon/2} \M{I}-\Half \M{X} - \Half \M{X}^2
\end{align}
  in work $\OO{m \cdot \log^{c_1}n / \epsilon^4}$
  and parallel depth $\OO{\log^{c_3}n}$,
  with number of non-zero entries in $\SM{X}$ upper bounded by
  $\OO{n\cdot \log^{c}n / \epsilon^{2}}$.
A technical subtlety, unlike in \cite{PengSpielman} 
  where one can simply use approximate diagonals,
  is that we need to maintain $\M{I}$, and the non-negativeness of $\SM{X}$.
This can be achieved using the splitting technique 
  presented in Section 3.3 of Peng's thesis~\cite{Peng13:thesis}.
\end{proof}

The following basic lemma then guarantees
  spectral approximation is
  preserved under the $p\textsuperscript{th}$ power, which 
  enables the continuation of the reduction.

\begin{lemma}\label{lem:pthrootapprox}
Let $\M{A}$, $\M{B}$ be positive definite matrices with $\M{A} \approx_{\epsilon} \M{B}$, then for all $p \in [-1, 1]$, we have that 
\begin{align}
\M{A}^{p} \approx_{|p|\epsilon} \M{B}^{p}.
\end{align}
\end{lemma}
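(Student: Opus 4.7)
The plan is to split into two cases according to the sign of $p$ and reduce everything to the Loewner--Heinz inequality, which states that for any $q \in [0,1]$ the scalar function $t \mapsto t^{q}$ is operator monotone on the positive semidefinite cone: whenever $\M{U} \succcurlyeq \M{V} \succcurlyeq 0$, we have $\M{U}^{q} \succcurlyeq \M{V}^{q}$. This is a classical fact (see, e.g., Bhatia's \emph{Matrix Analysis}) and I will simply quote it.

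First I would handle $p \in [0,1]$. By the definition of $\approx_{\epsilon}$, the hypothesis $\M{A} \approx_{\epsilon} \M{B}$ gives $\exp(\epsilon)\M{A} \succcurlyeq \M{B} \succcurlyeq \exp(-\epsilon)\M{A}$. Applying Loewner--Heinz with exponent $p$ to both inequalities, and using that scalar multiples commute with the power, yields
\begin{align}
\exp(p\epsilon)\M{A}^{p} \;\succcurlyeq\; \M{B}^{p} \;\succcurlyeq\; \exp(-p\epsilon)\M{A}^{p},
\end{align}
which is exactly $\M{A}^{p} \approx_{p\epsilon} \M{B}^{p}$. Since $p = |p|$ in this range, this is the desired bound.

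Next I would handle $p \in [-1, 0]$ by reducing to the previous case. Set $q = -p \in [0,1]$, so $\M{A}^{p} = (\M{A}^{q})^{-1}$ and likewise for $\M{B}$. The case just proved gives $\M{A}^{q} \approx_{q\epsilon} \M{B}^{q}$, and because both $\M{A}^{q}$ and $\M{B}^{q}$ are positive definite, Fact~\ref{fact}.d (inversion preserves multiplicative spectral approximation) yields $(\M{A}^{q})^{-1} \approx_{q\epsilon} (\M{B}^{q})^{-1}$, i.e., $\M{A}^{p} \approx_{|p|\epsilon} \M{B}^{p}$. Combining the two ranges covers all $p \in [-1,1]$.

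The only non-routine ingredient is the Loewner--Heinz operator monotonicity of $t \mapsto t^{p}$ for $p \in [0,1]$; without it the naive approach of diagonalizing and applying the scalar inequality $\lambda^{p} \leq (c\lambda)^{p} = c^{p}\lambda^{p}$ fails because $\M{A}$ and $\M{B}$ need not share an eigenbasis, so one cannot pass the power through a Loewner inequality pointwise. Quoting Loewner--Heinz resolves this obstacle cleanly, and the rest of the argument is a two-line manipulation of the definition together with Fact~\ref{fact}.d.
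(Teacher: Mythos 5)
Your proof is correct and follows essentially the same route as the paper: the paper also invokes the operator monotonicity of $t\mapsto t^{p}$ for $p\in[0,1]$ (citing Theorem 4.2.1 of Bhatia) to get $\exp(p\epsilon)\M{A}^{p}\succcurlyeq\M{B}^{p}\succcurlyeq\exp(-p\epsilon)\M{A}^{p}$, and then handles $p\in[-1,0]$ via Fact~\ref{fact}.d. Your writeup is just a slightly more explicit version of the same argument.
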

\begin{proof}
By Theorem 4.2.1 of \cite{bhatia},
  if $\M{A}\succcurlyeq \M{B}$ 
  then $\M{A}^{p} \succcurlyeq \M{B}^{p} $ for all $p \in [0, 1]$. 
Thus, if $\M{A} \approx_{\epsilon} \M{B}$, 
  then for all $p \in [0, 1]$,
\begin{align}
\exp \left(p \epsilon\right) \M{A}^p \succcurlyeq \M{B}^p \succcurlyeq \exp \left(-p\epsilon \right) \M{A}^p,
\end{align}
which implies $ \M{A}^{p} \approx_{|p|\epsilon} \M{B}^{p}$. By Fact \ref{fact}.d, the claim holds for all $p \in [-1, 0]$ as well.
\end{proof}

In the rest of the section, we 
  will bound the length of our sparse factor chain
  (Section \ref{sec:chainConvergence}),
  and show that the chain indeed 
  provides a good approximation to $(\M{I}-\M{X}_0)^p$
  (Section \ref{sec:chainError}).
In Section \ref{sec:chainRefine}, we present a refinement technique
  to further reduce the dependency of $\epsilon$ to $\log (1/\epsilon)$
  for the case when $p = \pm1$.

\subsection{Convergence of the Matrix Chain}
\label{sec:chainConvergence}

To bound the length of the sparse factor chain,
  we need to study the convergence behavior of 
  $\rho\left(\M{X}_i\right)$.
We show that a properly constructed matrix chain of
  length $\OO{\log(\kappa/\epsilon)}$ is sufficient
  to achieve $\epsilon$ precision.
 Our analysis is analogous to the one in \cite{PengSpielman}.
However, we have to deal with the first order term
  in $\M{I} - \Half \M{X} - \Half \M{X}^2$, which results a longer
  chain when $\epsilon$ is small.

We start with the following lemma, which analyzes one iteration of the chain construction.
We also take into account the approximation error 
  introduced by sparsification.

\begin{lemma}
\label{lem:eigIxx}
Let $\M{X}$ and $\SM{X}$ be nonnegative symmetric matrices such that $\M{I} \succcurlyeq \M{X}$, $\M{I} \succcurlyeq \SM{X}$, and 
\begin{align}
\M{I} - \SM{X} \approx_\epsilon \M{I} - \Half \M{X} - \Half \M{X}^2.
\end{align}
If $\rho(\M{X}) \le 1 - \lambda$, then the eigenvalues of $\SM{X}$ all lie between
$1 - \Half\left(3-\lambda\right) \exp(\epsilon)$ and $1 - \Half \left(3\lambda-\lambda^2 \right) \exp(-\epsilon)$.
\end{lemma}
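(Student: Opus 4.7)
The plan is to reduce the matrix statement to a scalar problem on an interval, using the fact that both $\M{X}$ and $\M{I} - \Half \M{X} - \Half \M{X}^2$ share an eigenbasis. Since $\M{X}$ is symmetric with $\rho(\M{X}) \le 1-\lambda$, its eigenvalues $\mu_i$ lie in $[-(1-\lambda), 1-\lambda]$, and the matrix $\M{I} - \Half \M{X} - \Half \M{X}^2$ has eigenvalues $g(\mu_i)$ with
\[
  g(\mu) \;=\; 1 - \tfrac{1}{2}\mu - \tfrac{1}{2}\mu^2 \;=\; \tfrac{1}{2}(1-\mu)(2+\mu).
\]
Thus it suffices to bound $g$ above and below on the interval $[-(1-\lambda), 1-\lambda]$, and then transfer those bounds to $\M{I} - \SM{X}$ via the assumed spectral approximation.

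For the lower bound on $g$, the factored form $\tfrac{1}{2}(1-\mu)(2+\mu)$ is the product of two positive factors on the interval, and the first factor is minimized exactly at the right endpoint $\mu = 1-\lambda$, which gives $g(1-\lambda) = \tfrac{1}{2}\lambda(3-\lambda) = \tfrac{1}{2}(3\lambda - \lambda^2)$; since a direct comparison shows $g(1-\lambda) \le g(-(1-\lambda))$, this is indeed the minimum. For the upper bound, rather than case-splitting on whether the unconstrained maximum at $\mu = -1/2$ lies inside the interval, I would use the one-line slack argument
\[
  -\mu - \mu^2 \;\le\; -\mu \;\le\; 1-\lambda,
\]
where the first inequality uses $\mu^2 \ge 0$ and the second uses $\mu \ge -(1-\lambda)$; this yields $g(\mu) \le \tfrac{1}{2}(3-\lambda)$ uniformly on the interval.

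To finish, I would invoke the spectral approximation $\M{I} - \SM{X} \approx_\epsilon \M{I} - \Half\M{X} - \Half\M{X}^2$. Because the right-hand side is PSD (its minimum eigenvalue is $\tfrac{1}{2}\lambda(3-\lambda) \ge 0$ for $\lambda \in [0,1]$), so is the left-hand side, and the approximation yields
\[
  \lambda_{\min}(\M{I}-\SM{X}) \ge \exp(-\epsilon)\cdot\tfrac{1}{2}(3\lambda-\lambda^2), \qquad \lambda_{\max}(\M{I}-\SM{X}) \le \exp(\epsilon)\cdot\tfrac{1}{2}(3-\lambda).
\]
Subtracting from $1$ converts these into the stated two-sided bounds on the eigenvalues of $\SM{X}$.

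The only place that requires any real care is the upper bound on $g$, since the global maximum of $g$ is $9/8$ at $\mu = -1/2$, so a careless bound would lose a constant; the slack argument above avoids any case analysis and produces precisely the $\tfrac{1}{2}(3-\lambda)$ appearing in the statement. Everything else is a direct application of Fact~\ref{fact} together with the Loewner-order characterization of extreme eigenvalues for PSD matrices.
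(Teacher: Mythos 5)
Your proof is correct and follows essentially the same route as the paper: both determine that the eigenvalues of $\M{I}-\Half\M{X}-\Half\M{X}^2$ lie in $[(3\lambda-\lambda^2)/2,\,(3-\lambda)/2]$ when $\rho(\M{X})\le 1-\lambda$ and $\M{I}\succcurlyeq\M{X}$, and then transfer this interval through the $\exp(\pm\epsilon)$ spectral approximation. Your write-up simply supplies the scalar endpoint computations that the paper asserts without derivation.
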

\begin{proof}
If the eigenvalues of $\M{I} - \M{X}$ are between $\lambda$ and $2-\lambda$,
  the eigenvalues of $\M{I} - \Half\M{X} - \Half\M{X}^2$ will belong to $[(3\lambda-\lambda^2)/2, (3-\lambda)/2]$.
The bound stated in this lemma then follows from the fact
  that the spectral sparsification step preserves the eigenvalues of
  $\M{I} - \Half\M{X} - \Half\M{X}^2$ to 
  within a factor of $\exp(\pm \epsilon)$.
  \end{proof}

We next prove that our sparse matrix chain achieves overall 
  precision $\epsilon$.

\begin{lemma}[Short Factorization Chain]\label{crl:chain}
Let $\M{M} = \M{I} - \M{X}_0$ be 
  an SDDM matrix with condition number $\kappa$.
For any approximation parameter $\epsilon$, there exists an $(\epsilon_0, \ldots, \epsilon_d)$-sparse factor chain of $\M{M}$ with 
  $d = \log_{9/8}(\kappa/\epsilon)$ and  $\sum_{j=0}^d \epsilon_j \leq \epsilon$, satisfying
\begin{align}\label{eqn:chaincondition}
 \begin{split}
  \M{I}-\M{X}_{i+1} & \approx_{\epsilon_{i}} \M{I} - \Half \M{X}_{i} - \Half \M{X}_{i}^2
  	\qquad \forall~0 \le i \le d-1,  \\
  \M{I}-\M{X}_{d} & \approx_{\epsilon_{d}} \M{I}.
 \end{split}
\end{align}
\end{lemma}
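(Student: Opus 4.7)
The plan is to construct the chain greedily. First, I would apply Lemma~\ref{lem:ixOk} to normalize $\M{M}$ so that $\rho(\M{X}_0) \le 1 - 1/(2\kappa)$. Then I set $d := \lceil \log_{9/8}(\kappa/\epsilon) \rceil$ and split the error budget as $\epsilon_d := \epsilon/2$ and $\epsilon_i := \epsilon/(2d)$ for $0 \le i < d$, so that $\sum_{j=0}^d \epsilon_j = \epsilon$ automatically. For each $i < d$, invoke the sparsification lemma on $\M{I} - \Half\M{X}_i - \Half\M{X}_i^2$ with target precision $\epsilon_i$ to obtain $\M{X}_{i+1}$, which satisfies condition (1) of the chain definition by construction.

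It then remains to show that $\rho(\M{X}_d) \le \exp(\epsilon_d) - 1$, which is the scalar condition equivalent to $\M{I} \approx_{\epsilon_d} \M{I} - \M{X}_d$. Writing $\rho_i := \rho(\M{X}_i)$ and $\lambda_i := 1 - \rho_i$, Lemma~\ref{lem:eigIxx} yields
\begin{align*}
\rho_{i+1} \;\le\; \max\!\Bigl\{\; \Half(3-\lambda_i)\exp(\epsilon_i) - 1,\;\; 1 - \Half(3\lambda_i - \lambda_i^2)\exp(-\epsilon_i)\;\Bigr\}.
\end{align*}
While $\lambda_i \le 3/4$, a direct check shows the second branch dominates and gives $\lambda_{i+1} \ge \Half \lambda_i(3-\lambda_i)\exp(-\epsilon_i) \ge (9/8)\, \lambda_i \exp(-\epsilon_i)$, so $\lambda_i$ grows by a factor of at least $9/8$ per step, reaching $3/4$ within $\log_{9/8}(3\kappa/2) = \log_{9/8}(\kappa) + \OO{1}$ iterations. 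Once $\lambda_i \ge 3/4$ (equivalently $\rho_i \le 1/4$), expanding both branches around $\rho_i = 0$ yields the affine recursion $\rho_{i+1} \le (5/8)\rho_i + C\epsilon_i$ for an absolute constant $C$; unrolling gives $\rho_{i_0+k} \le (5/8)^k \rho_{i_0} + \OO{\epsilon_i}$, so within an additional $\log_{8/5}(1/\epsilon) \le \log_{9/8}(1/\epsilon)$ steps, $\rho_d \le \epsilon/2 \le \exp(\epsilon_d) - 1$. Summing, the total iteration count fits inside $d = \lceil\log_{9/8}(\kappa/\epsilon)\rceil$, establishing condition (2).

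The main obstacle is the contraction phase, where the multiplicative $\exp(\pm\epsilon_i)$ slack of sparsification manifests as an additive error floor $\OO{\epsilon_i}$ that prevents $\rho_i$ from decaying indefinitely. This dictates the error allocation: choosing $\epsilon_i = \epsilon/(2d)$ keeps this floor $\OO{\epsilon/d}$ well below the target $\epsilon_d = \epsilon/2$, so the geometric contraction wins until $\rho_d$ is safely within range. A second subtlety is matching the $9/8$-growth of $\lambda_i$ in the first regime with the $5/8$-contraction of $\rho_i$ in the second, at the crossover $\lambda_i = 3/4$; this reduces to a short case analysis on the quadratic $T(\lambda) = \lambda(3-\lambda)/2$ and confirms uniform per-step progress of factor at least $9/8$ in the appropriate potential. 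The remaining steps---combining the per-step $\approx_{\epsilon_i}$ relations via Fact~\ref{fact} and summing the allocated error budget to $\epsilon$---are routine.
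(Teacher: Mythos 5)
Your proposal follows essentially the same route as the paper: a two-phase analysis via Lemma~\ref{lem:eigIxx}, in which the spectral gap $\lambda_i = 1-\rho(\M{X}_i)$ first grows geometrically at rate $9/8$ until it is bounded away from $0$, after which $\rho(\M{X}_i)$ contracts geometrically down to $O(\epsilon)$, with the per-step sparsification budget set to $\Theta(\epsilon/d)$ so the accumulated error stays below $\epsilon$; the differences (crossover at $3/4$ rather than $\Half$, an affine recursion with an $O(\epsilon_i)$ floor in place of the paper's purely multiplicative $8/9$ contraction) are bookkeeping choices and both work. One small correction: the scalar condition equivalent to $\M{I} \approx_{\epsilon_d} \M{I}-\M{X}_d$ is $\rho(\M{X}_d) \le 1-\exp(-\epsilon_d)$, not $\rho(\M{X}_d) \le \exp(\epsilon_d)-1$, since the binding constraint comes from the upper end of the spectrum; your quantitative bound $\rho(\M{X}_d) \le \epsilon/4 + O(\epsilon/d)$ still clears this stricter threshold for $\epsilon_d = \epsilon/2$, so the argument survives. (Similarly, the phase-one growth factor is really $(9/8)\exp(-\epsilon_i)$, which costs only $O(1)$ extra iterations over the whole chain.)
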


\begin{proof}
Without loss of generality, we assume $\epsilon \leq \frac{1}{10}$.
For $0 \le i \le d-1$, we set $\epsilon_i = \frac{\epsilon}{8d}$.
We will show that our chain satisfies the condition
 given by Equation (\ref{eqn:chaincondition}) with $\epsilon_d \leq (7\epsilon)/8$, and thus
  $\sum_{i=0}^d \epsilon_i\leq \epsilon.$

Let $\lambda_i = 1- \rho(\M{X}_i)$.
We split the error analysis to two parts,
  one for $\lambda_i \le \Half$,  and one for $\lambda_i \ge \Half$.

In the first part, 
  because $\epsilon_i \le 1/10$, 
\begin{align}
\begin{split}
1 - \Half \left(3\lambda_i-\lambda_i^2 \right) \exp(-\epsilon_i) &<
1- \frac{9}{8} \lambda_i \quad \mbox{and} \\
1 - \Half\left(3-\lambda_i\right) \exp(\epsilon_i) &>
-1+ \frac{9}{8} \lambda_i
\end{split}
\end{align}
which, by Lemma \ref{lem:eigIxx}, implies that 
  $\lambda_{i+1} \ge \frac{9}{8}\lambda_i$.
Because initially, $\lambda_0 \geq 1/\kappa$,
  after $d_1 = \log_{9/8}\kappa$ iterations,
  it must be the case that $\lambda_{d_1} \ge \Half$.
 
Now we enter the second part.
Note that $\epsilon_i = \frac{\epsilon}{8d} \le \frac{\epsilon}{6}$.
Thus, when $\Half \le \lambda_i \le 1 - \frac{5}{6} \epsilon$,
  we have $\epsilon_i \le \frac{1-\lambda_i}{5}$ and
  therefore
\begin{align}
\begin{split}
1 - \Half \left(3\lambda_i-\lambda_i^2 \right) \exp(-\epsilon_i) & <
\frac{8}{9}(1-\lambda_i) \quad \mbox{and} \\
1 - \Half\left(3-\lambda_i\right) \exp(\epsilon_i) &>
\frac{8}{9}(-1+\lambda_i)
\end{split}
\end{align}
which,  by Lemma \ref{lem:eigIxx}, 
 implies that $1-\lambda_{i+1} \le \frac{8}{9}(1-\lambda_i)$.
Because $1-\lambda_{d_1} \le \Half$,
  after another $d_2 = \log_{9/8}(1/\epsilon)$ iterations,
  we get $\rho\left(\M{X}_{d_1+d_2}\right) \le \frac{5}{6} \epsilon$.
Because $\exp(-\frac{7}{8} \epsilon) < 1 - \frac{5}{6}\epsilon$
  when $\epsilon \le \frac{1}{10}$,
we conclude that $\left(\M{I} - \M{X}_{d_1+d_2}\right) \approx_{\epsilon_d} \M{I}$
  with $\epsilon_d \le \frac{7}{8}\epsilon$.
\end{proof}

\subsection{Precision of the Factor Chain}
\label{sec:chainError}

We now bound the total error of the matrix factor that our algorithm
  constructs.
We start with an error-analysis of a single reduction step 
  in the algorithm.
\begin{lemma}\label{lem:errOne}
Fix $p \in [-1, 1]$, given an $\V{\epsilon}$-sparse factor chain,
  for all $0 \le i \le d-1$, there exists degree $t_i$ polynomials,
  with $t_i=\OO{\log(1/\epsilon_i)}$, such that
\begin{align}
(\M{I}-\M{X}_{i})^{p} 
\approx_{2 \epsilon_{i}}  
T_{-\Half p,t_i}\left(\M{I}+\Half\M{X}_i\right)
\left(\M{I} -\M{X}_{i+1}\right)^{p}
T_{-\Half p,t_i}\left(\M{I}+\Half\M{X}_i\right).
\end{align}
\end{lemma}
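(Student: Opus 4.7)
The plan is to assemble the claim from three ingredients already established in the paper: the matrix-polynomial factorization identity of Lemma~\ref{lem:taylor}, the chain-defining approximation $\M{I}-\M{X}_{i+1}\approx_{\epsilon_i}\M{I}-\tfrac12\M{X}_i-\tfrac12\M{X}_i^2$, and the basic approximation facts (composition, monotonicity under symmetric conjugation, and the $p$-th power lemma). Each of these contributes $\epsilon_i$ (or less) of error, and stitching two of them with transitivity yields the advertised $2\epsilon_i$.

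First I would apply Lemma~\ref{lem:taylor} to $\M{X}=\M{X}_i$ with error parameter $\epsilon_i$, which, since $\rho(\M{X}_i)<1$ in any valid chain, produces a polynomial $T_{-p/2,t_i}(\cdot)$ of degree $t_i=\OO{\log(1/\epsilon_i)}$ such that
\begin{align*}
(\M{I}-\M{X}_i)^{p}\approx_{\epsilon_i}
T_{-\frac{p}{2},t_i}\!\left(\M{I}+\tfrac12\M{X}_i\right)
\left(\M{I}-\tfrac12\M{X}_i-\tfrac12\M{X}_i^2\right)^{p}
T_{-\frac{p}{2},t_i}\!\left(\M{I}+\tfrac12\M{X}_i\right).
\end{align*}

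Next I would upgrade the chain approximation $\M{I}-\M{X}_{i+1}\approx_{\epsilon_i}\M{I}-\tfrac12\M{X}_i-\tfrac12\M{X}_i^2$ to its $p$-th power via Lemma~\ref{lem:pthrootapprox}: since $|p|\le 1$, this gives $(\M{I}-\M{X}_{i+1})^{p}\approx_{|p|\epsilon_i}(\M{I}-\tfrac12\M{X}_i-\tfrac12\M{X}_i^2)^{p}\approx_{\epsilon_i}(\M{I}-\tfrac12\M{X}_i-\tfrac12\M{X}_i^2)^{p}$. Then, observing that $T_{-p/2,t_i}(\M{I}+\tfrac12\M{X}_i)$ is a polynomial in a symmetric matrix and hence symmetric, I would apply Fact~\ref{fact}.e with $\M{V}=T_{-p/2,t_i}(\M{I}+\tfrac12\M{X}_i)=\M{V}^\T$ to conclude
\begin{align*}
T_{-\frac{p}{2},t_i}\!\left(\M{I}+\tfrac12\M{X}_i\right)\!
\left(\M{I}-\tfrac12\M{X}_i-\tfrac12\M{X}_i^2\right)^{p}\!
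T_{-\frac{p}{2},t_i}\!\left(\M{I}+\tfrac12\M{X}_i\right)
\approx_{\epsilon_i}
T_{-\frac{p}{2},t_i}\!\left(\M{I}+\tfrac12\M{X}_i\right)\!
(\M{I}-\M{X}_{i+1})^{p}\,
T_{-\frac{p}{2},t_i}\!\left(\M{I}+\tfrac12\M{X}_i\right).
\end{align*}

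Finally, I would chain the two $\epsilon_i$-approximations through Fact~\ref{fact}.c to reach the stated $2\epsilon_i$ bound. The proof is essentially a triangle-inequality argument in the spectral-approximation pseudometric, and the only mild subtlety I anticipate is making sure the two conjugating copies of $T_{-p/2,t_i}(\M{I}+\tfrac12\M{X}_i)$ are literally the same symmetric matrix on both sides (so that Fact~\ref{fact}.e applies with $\M{V}=\M{V}^\T$ as a single object), and that Lemma~\ref{lem:pthrootapprox} can be invoked because both $\M{I}-\M{X}_{i+1}$ and $\M{I}-\tfrac12\M{X}_i-\tfrac12\M{X}_i^2$ are positive definite — the former because it is SDDM by construction of the chain, the latter because $\rho(\M{X}_i)<1$ implies $1-\tfrac12\lambda-\tfrac12\lambda^2>0$ on the spectrum.
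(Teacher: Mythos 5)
Your proposal is correct and follows essentially the same route as the paper: one application of Lemma~\ref{lem:taylor} contributes the first $\epsilon_i$, and the chain condition lifted to $p$-th powers via Lemma~\ref{lem:pthrootapprox} and conjugated via Fact~\ref{fact}.e contributes the second, with Fact~\ref{fact}.c giving the total $2\epsilon_i$. The symmetry and positive-definiteness checks you flag are left implicit in the paper but are correct and worth stating.
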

\begin{proof}
\begin{align}
\begin{split}
\left(\M{I}-\M{X}_{i}\right)^{p} 
&\approx_{\epsilon_{i}} 
T_{-\Half p,t_i}\left(\M{I}+\Half\M{X}_i\right)
\left( \M{I}-\Half\M{X}_{i}-\Half \M{X}^2_i \right)^{p}
T_{-\Half p,t_i}\left(\M{I}+\Half\M{X}_i\right) \\
&\approx_{\epsilon_{i}} 
T_{-\Half p,t_i}\left(\M{I}+\Half\M{X}_i\right)
\left(\M{I} -\M{X}_{i+1}\right)^{p}
T_{-\Half p,t_i}\left(\M{I}+\Half\M{X}_i\right).
\end{split}
\end{align}

The first approximation follows from  Lemma \ref{lem:taylor}.
The second approximation follows from 
Lemma \ref{lem:pthrootapprox}, Fact \ref{fact}.e, and 
$
\M{I}-\M{X}_{i+1} \approx_{\epsilon_i} \M{I}-\Half\M{X}_{i}-\Half \M{X}^2_i
$. 
\end{proof}

The next lemma bounds the total error of our construction.

\begin{lemma}\label{lem:error}
Define $\M{Z}_{p,i}$ as
\begin{align}
  \M{Z}_{p,i} =
  \begin{cases}
    \M{I} & \text{if $i = d$}\\
    \left(T_{-\Half p,t_i}\left(\M{I}+\Half\M{X}_i\right)\right) \M{Z}_{p,i+1} & \text{if }0 \le i \le d-1.
  \end{cases}
\end{align}
Then the following statement is true for all $0 \le i \le d$
\begin{align}
  \M{Z}_{p,i} \M{Z}^{\T}_{p,i} \approx_{2\sum_{j=i}^{d} \epsilon_j} \left(\M{I}-\M{X}_i\right)^{p}.
\end{align}
\end{lemma}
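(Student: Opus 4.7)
The plan is to prove this by backward induction on $i$, from $i=d$ down to $i=0$, using transitivity of $\approx$ and the single-step bound from Lemma~\ref{lem:errOne}. The key observation that makes everything fit together is that $T_{-p/2, t_i}(\M{I}+\frac{1}{2}\M{X}_i)$ is symmetric (a real polynomial in a symmetric matrix), so the recursion $\M{Z}_{p,i} = T_{-p/2, t_i}(\M{I}+\frac{1}{2}\M{X}_i)\,\M{Z}_{p,i+1}$ gives
\[
\M{Z}_{p,i}\M{Z}_{p,i}^\T \;=\; T_{-\Half p,t_i}\!\left(\M{I}+\Half\M{X}_i\right) \bigl(\M{Z}_{p,i+1}\M{Z}_{p,i+1}^\T\bigr) T_{-\Half p,t_i}\!\left(\M{I}+\Half\M{X}_i\right),
\]
which is exactly the conjugation pattern that Fact~\ref{fact}.e is designed to handle.

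For the base case $i = d$, I would use that $\M{Z}_{p,d}\M{Z}_{p,d}^\T = \M{I}$, combined with the chain condition $\M{I} - \M{X}_d \approx_{\epsilon_d} \M{I}$ and Lemma~\ref{lem:pthrootapprox}, which gives $(\M{I}-\M{X}_d)^p \approx_{|p|\epsilon_d} \M{I}$. Since $|p| \le 1 \le 2$, this is in particular a $2\epsilon_d$-approximation, matching the claim at $i=d$.

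For the inductive step, assume $\M{Z}_{p,i+1}\M{Z}_{p,i+1}^\T \approx_{2\sum_{j=i+1}^d \epsilon_j} (\M{I}-\M{X}_{i+1})^p$. Applying Fact~\ref{fact}.e with $\M{V} = T_{-p/2, t_i}(\M{I}+\frac{1}{2}\M{X}_i)$ propagates this through the symmetric conjugation:
\[
\M{Z}_{p,i}\M{Z}_{p,i}^\T \;\approx_{2\sum_{j=i+1}^d \epsilon_j}\; T_{-\Half p,t_i}\!\left(\M{I}+\Half\M{X}_i\right) (\M{I}-\M{X}_{i+1})^p\, T_{-\Half p,t_i}\!\left(\M{I}+\Half\M{X}_i\right).
\]
Lemma~\ref{lem:errOne} then states that this latter expression is $\approx_{2\epsilon_i}$-close to $(\M{I}-\M{X}_i)^p$, and one application of Fact~\ref{fact}.c (transitivity) combines the two errors additively to yield $\approx_{2\sum_{j=i}^d \epsilon_j} (\M{I}-\M{X}_i)^p$, completing the induction.

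There is no real obstacle here; the argument is essentially bookkeeping, and the only subtle point is making sure the symmetric-conjugation move is justified, which requires noting that the matrix polynomial $T_{-p/2, t_i}$ applied to the symmetric matrix $\M{I} + \frac{1}{2}\M{X}_i$ is itself symmetric so that Fact~\ref{fact}.e applies with a single matrix $\M{V}$ on each side. Everything else follows by directly invoking Lemma~\ref{lem:errOne} and chaining approximations via Fact~\ref{fact}.c.
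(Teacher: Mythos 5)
Your proposal is correct and follows essentially the same route as the paper's own proof: reverse induction with the base case handled by Lemma~\ref{lem:pthrootapprox} applied to $\M{I}\approx_{\epsilon_d}\M{I}-\M{X}_d$, the conjugation propagated via Fact~\ref{fact}.e, the single-step error absorbed via Lemma~\ref{lem:errOne}, and the errors added via transitivity. Your explicit remark that $T_{-p/2,t_i}(\M{I}+\frac{1}{2}\M{X}_i)$ is symmetric (so that $\M{Z}_{p,i}\M{Z}_{p,i}^\T$ really is a symmetric conjugation of $\M{Z}_{p,i+1}\M{Z}_{p,i+1}^\T$) is a detail the paper leaves implicit, but otherwise the two arguments coincide.
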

\begin{proof}
We prove this lemma by reverse induction. Combine $\M{I} \approx_{\epsilon_d} \M{I} - \M{X}_d$ and Lemma \ref{lem:pthrootapprox},
  we know that $\M{Z}_{p,d} \M{Z}^{\T}_{p,d} = \M{I}^{p} \approx_{\epsilon_d} \left(\M{I}-\M{X}_d\right)^{p}$,
  so the claim is true for $i=d$.
Now assume the claim is true for $i=k+1$, then
\begin{align}
\begin{split}
\M{Z}_{p,k} \M{Z}^{\T}_{p,k}
= &
T_{-\Half p,t_k}\left(\M{I}+\Half\M{X}_k\right)
\M{Z}_{p,k+1} \M{Z}^{\T}_{p,k+1}
T_{-\Half p,t_k}\left(\M{I}+\Half\M{X}_k\right) \\
{\approx}&{}_{2 \sum_{j=k+1}^d \epsilon_j} 
T_{-\Half p,t_k}\left(\M{I}+\Half\M{X}_k\right)
\left(\M{I}-\M{X}_{k+1}\right)^{p}
T_{-\Half p,t_k}\left(\M{I}+\Half\M{X}_k\right) \\
{\approx}&{}_{2 \epsilon_k} \left(\M{I}-\M{X}_{k}\right)^{p}.
\end{split}
\end{align}
The last approximation follows from
  Lemma \ref{lem:errOne}.
\end{proof}

If we define $\tilde{\M{C}}$ to be $\M{Z}_{-\Half,0}$,
  this lemma implies $\tilde{\M{C}} \tilde{\M{C}}^\T \approx_{\sum_{j=0}^d \epsilon_j} (\M{I}-\M{X}_0)^{-1}$.

\subsection{Iterative Refinement for Inverse Square-Root Factor}
\label{sec:chainRefine}

In this section, 
  we provide a highly 
  accurate factor 
  for the case when $p=-1$.
We use a refinement approach inspired by the precondition technique for square factors
described in Section 2.2 of~\cite{chow14precon}.

\begin{theorem}\label{thm:precond}
For any symmetric positive definite matrix $\MM$, any matrix $\M{C}$
  such that $\MM^{-1} \approx_{\OO{1}} \M{Z} \M{Z}^\T$,
  and any error tolerance $\epsilon > 0$, there exists
  a linear operator $\tilde{\M{C}}$ which is an $\OO{\log(1/\epsilon)}$-degree
   polynomial of $\MM$ and $\M{Z}$, such that
  $\tilde{\M{C}} \tilde{\M{C}}^\T \approx_\epsilon \MM^{-1}$.
\end{theorem}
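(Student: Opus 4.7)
The plan is to use $\M{Z}$ as a preconditioner: it reduces the inverse square-root factorization of $\M{M}$ to the inverse square-root factorization of a well-conditioned matrix, which in turn admits a good low-degree polynomial approximation via a single Maclaurin series.

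First, I would introduce $\M{W} := \M{Z}^{\T} \M{M} \M{Z}$. Conjugating the hypothesis $\M{M}^{-1} \approx_{\OO{1}} \M{Z}\M{Z}^{\T}$ by $\M{M}^{1/2}$ (Fact~\ref{fact}.e) gives $\M{I} \approx_{\OO{1}} \M{M}^{1/2} \M{Z} \M{Z}^{\T} \M{M}^{1/2}$, and the latter matrix is similar to $\M{W}$, so the spectrum of $\M{W}$ lies in a fixed interval $[e^{-K}, e^{K}]$ for some absolute constant $K$. This also shows that $\M{Z}$ has full rank, hence is invertible as an $n \times n$ matrix, and in particular $\M{Z} \M{W}^{-1} \M{Z}^{\T} = \M{M}^{-1}$.

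Second, I would rescale by the constant $\alpha = 2/(e^{K}+e^{-K})$, so that the eigenvalues of $\alpha \M{W}$ lie in $[1-\delta, 1+\delta]$ with $\delta = \tanh(K) < 1$. Applying Lemma~\ref{lem:taylorscalar} with $p = -1/2$, this $\delta$, and error tolerance $\epsilon$, I obtain a polynomial $T_{-1/2, t}(\cdot)$ of degree $t = \OO{\log(1/\epsilon)}$ such that $T_{-1/2,t}(\lambda)^2 \approx_{\epsilon} \lambda^{-1}$ for every $\lambda$ in the spectrum of $\alpha \M{W}$. Spectrally this yields
\[
T_{-1/2,t}(\alpha \M{W})^2 \approx_{\epsilon} (\alpha \M{W})^{-1}.
\]

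Third, I would define $\tilde{\M{C}} := \sqrt{\alpha}\, \M{Z}\, T_{-1/2, t}(\alpha \M{W})$. Since $\M{W}$ is symmetric, so is $T_{-1/2,t}(\alpha \M{W})$, and therefore
\[
\tilde{\M{C}} \tilde{\M{C}}^{\T} \;=\; \alpha\, \M{Z}\, T_{-1/2,t}(\alpha \M{W})^2\, \M{Z}^{\T} \;\approx_{\epsilon}\; \alpha\, \M{Z}\, (\alpha \M{W})^{-1}\, \M{Z}^{\T} \;=\; \M{Z}\M{W}^{-1}\M{Z}^{\T} \;=\; \M{M}^{-1},
\]
where Fact~\ref{fact}.e is used to propagate the approximation through the conjugation by $\sqrt{\alpha}\,\M{Z}^{\T}$. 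By construction $\tilde{\M{C}}$ is a non-commutative polynomial of total degree $\OO{\log(1/\epsilon)}$ in $\M{M}$ and $\M{Z}$.

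The main technical obstacle I anticipate is the bookkeeping around the rescaling: one must show that the generic $\OO{1}$ constant $K$ in the hypothesis translates into a valid $\delta < 1$ inside the domain of Lemma~\ref{lem:taylorscalar}, while keeping the final polynomial degree at $\OO{\log(1/\epsilon)}$ with only the hidden constants depending on $K$. Once the correct rescaling is pinned down, the three steps above assemble directly into the proof.
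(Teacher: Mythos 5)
Your proposal is correct and takes essentially the same route as the paper's proof: precondition by $\M{Z}$, note that $\M{Z}^{\T}\MM\M{Z}\approx_{\OO{1}}\M{I}$ so a constant rescaling places its spectrum in $[1-\delta,1+\delta]$, apply Lemma~\ref{lem:taylorscalar} with $p=-1/2$, and conjugate back through $\M{Z}$ using Fact~\ref{fact}.e together with $\M{Z}(\M{Z}^{\T}\MM\M{Z})^{-1}\M{Z}^{\T}=\MM^{-1}$. Your write-up merely makes explicit what the paper leaves implicit (the value of $\alpha$, $\delta=\tanh K$, and the invertibility of $\M{Z}$); the only bookkeeping nit is that the Maclaurin lemma should be invoked with tolerance $\epsilon/2$ since the polynomial is squared.
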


\begin{proof}

Starting from $\MM^{-1} \approx_{\OO{1}} \M{Z} \M{Z}^\T$, we know that
  $\M{Z}^\T \MM \M{Z} \approx_{\OO{1}} \M{I}$.
This implies that $\kappa(\M{Z}^\T \MM \M{Z}) = \OO{1}$, which we can scale $\M{Z}^\T \MM \M{Z}$ so that its eigenvalues lie in $[1-\delta,1+\delta]$ for some constant $0<\delta<1$.

By applying Lemma \ref{lem:taylorscalar} on its eigenvalues, there is an $\OO{\log(1/\epsilon)}$ degree
  polynomial $T_{-\Half, t}(\cdot)$ that approximates the inverse square root of
  $\M{Z}^\T \MM \M{Z}$, i.e.
\begin{align}
  \left(T_{-\Half, \OO{\log(1/\epsilon)}} \left(\M{Z}^\T \MM \M{Z}\right)\right)^2 \approx_\epsilon \left(\M{Z}^\T \MM \M{Z}\right)^{-1}.
\end{align}
Here, we can rescale $\M{Z}^\T \MM \M{Z}$ back inside $T_{-\Half, t}(\cdot)$, which does not affect the multiplicative error. Then by Fact \ref{fact}.e, we have
\begin{align}
  \M{Z} \left(T_{-\Half, \OO{\log(1/\epsilon)}}\left(\M{Z}^\T \MM \M{Z}\right)\right)^2 \M{Z}^\T \approx_\epsilon \M{Z} \left(\M{Z}^\T \MM \M{Z}\right)^{-1} \M{Z}^\T = \MM^{-1}.
\end{align}
So if we define the linear operator
  $\tilde{\M{C}} = \M{Z} \left(T_{-\Half, \OO{\log(1/\epsilon)}}\left(\M{Z}^\T \MM \M{Z}\right) \right)$,
  $\tilde{\M{C}}$ satisfies the claimed properties.
\end{proof}
Note that this refinement procedure is specific to $p = \pm 1$.
It remains open if it can be extended to general $p\in [-1,1]$.

\bibliographystyle{alpha}
\bibliography{isr,parallelSampling}
\begin{appendix}
\section{Gremban's Reduction for Inverse Square Root Factor}
\label{sec:reduction}

We first note that Gremban's reduction \cite{Gremban} can be extended from solving linear systems to computing the inverse square-root factor. The problem of finding a inverse factor of a SDD matrix, can be reduced to finding a inverse factor of a SDDM matrix that is twice as large.
\begin{lemma}
\label{lem:sddm}
Let $\M{\Lambda}=\M{D}+\M{A}_{n}+\M{A}_{p}$ be an $n\times n$ SDD matrix,
  where $\M{D}$ is the diagonal of $\M{\Lambda}$,
  $\M{A}_{p}$ contains all the positive off-diagonal entries of $\M{\Lambda}$,
  and $\M{A}_{n}$ contains all the negative off-diagonal entries.

Consider the following SDDM matrix $\M{S}$, and an inverse factor $\tilde{\M{C}}$ of $\M{S}$, such that
\begin{align}
\M{S}=\left[
\begin{array}{cc}
\M{D}+\M{A}_{n} & -\M{A}_{p}\\
-\M{A}_{p} & \M{D}+\M{A}_{n}
\end{array}
\right] \text{ and }
\tilde{\M{C}} \tilde{\M{C}}^\T = \M{S}.
\end{align}

Let $\M{I}_{n}$ be the $n\times n$ identity matrix.
The  matrix
\begin{align}
\M{C} =\frac{1}{\sqrt{2}}\left[\begin{array}{cc}
\M{I}_{n} &
-\M{I}_{n} \\
\end{array}\right]\tilde{\M{C}}
\end{align}
 is an inverse square-root factor of $\M{\Lambda}$.

\end{lemma}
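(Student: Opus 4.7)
The goal is to verify that $\M{C}\M{C}^{\T} = \M{\Lambda}^{-1}$ when $\tilde{\M{C}}\tilde{\M{C}}^{\T} = \M{S}^{-1}$ (interpreting $\tilde{\M{C}}$ as an inverse square-root factor of $\M{S}$, as defined earlier in the paper). My plan is to exploit the antisymmetric block structure of $\M{S}$: the crucial observation is that $\M{S}$ preserves the subspace of ``antisymmetric'' vectors of the form $\bigl[\V{y}^{\T}, -\V{y}^{\T}\bigr]^{\T}$, and on this subspace it acts exactly as $\M{\Lambda}$.

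\textbf{Step 1: The key identity.} By a direct block-wise multiplication, for any $\V{y} \in \Reals{n}$,
\begin{align*}
\M{S}\begin{bmatrix} \V{y} \\ -\V{y} \end{bmatrix}
= \begin{bmatrix} (\M{D}+\M{A}_n)\V{y} + \M{A}_p\V{y} \\ -\M{A}_p \V{y} -(\M{D}+\M{A}_n)\V{y}\end{bmatrix}
= \begin{bmatrix} \M{\Lambda}\V{y} \\ -\M{\Lambda}\V{y} \end{bmatrix}.
\end{align*}
Defining the isometric embedding $\phi(\V{y}) = \tfrac{1}{\sqrt 2}\bigl[\V{y}^{\T},-\V{y}^{\T}\bigr]^{\T}$, this reads $\M{S}\phi(\V{y}) = \phi(\M{\Lambda}\V{y})$. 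Since $\M{\Lambda}$ is invertible (SDD with at least one strict dominance, or handled via the standard null-space argument), we may substitute $\V{y} = \M{\Lambda}^{-1}\V{h}$ to get $\M{S}^{-1}\phi(\V{h}) = \phi(\M{\Lambda}^{-1}\V{h})$ for every $\V{h} \in \Reals{n}$.

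\textbf{Step 2: Computing $\M{C}\M{C}^{\T}$.} Note that $\tfrac{1}{\sqrt 2}[\M{I}_n,-\M{I}_n]^{\T}\V{h} = \phi(\V{h})$, and the adjoint operation $\tfrac{1}{\sqrt 2}[\M{I}_n,-\M{I}_n]\phi(\V{z}) = \V{z}$ recovers the first factor of an antisymmetric vector. Therefore, using $\tilde{\M{C}}\tilde{\M{C}}^{\T} = \M{S}^{-1}$,
\begin{align*}
\M{C}\M{C}^{\T}\V{h}
&= \tfrac{1}{\sqrt 2}[\M{I}_n,-\M{I}_n]\,\tilde{\M{C}}\tilde{\M{C}}^{\T}\tfrac{1}{\sqrt 2}[\M{I}_n,-\M{I}_n]^{\T}\V{h} \\
&= \tfrac{1}{\sqrt 2}[\M{I}_n,-\M{I}_n]\,\M{S}^{-1}\phi(\V{h})
= \tfrac{1}{\sqrt 2}[\M{I}_n,-\M{I}_n]\,\phi(\M{\Lambda}^{-1}\V{h})
= \M{\Lambda}^{-1}\V{h}.
\end{align*}
Since this holds for every $\V{h}$, we conclude $\M{C}\M{C}^{\T} = \M{\Lambda}^{-1}$, i.e., $\M{C}$ is an inverse square-root factor of $\M{\Lambda}$.

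\textbf{Anticipated obstacles.} The algebra is essentially routine; the only non-trivial step is recognizing the antisymmetric invariant subspace, which is exactly the ingredient that makes Gremban's original reduction work for linear systems. One subtle point to verify is that $\M{S}$ is indeed SDDM (which follows because $\M{A}_p \ge 0$ makes the off-diagonal blocks non-positive, while the rows of $\M{S}$ inherit weak diagonal dominance from $\M{\Lambda}$ and are strictly dominant wherever a positive off-diagonal of $\M{\Lambda}$ contributed). I would also flag that the statement as written says ``$\tilde{\M{C}}\tilde{\M{C}}^{\T} = \M{S}$''; I will treat this as a typo for $\M{S}^{-1}$, consistent with the definition of ``inverse square-root factor'' used throughout the paper.
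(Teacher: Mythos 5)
Your proposal is correct and follows essentially the same route as the paper: the paper's proof simply asserts the identity $\M{\Lambda}^{-1}=\frac{1}{2}\left[\M{I}_{n}\;\; -\M{I}_{n}\right]\M{S}^{-1}\left[\M{I}_{n}\;\; -\M{I}_{n}\right]^{\T}$ and concludes, while you supply the verification of that identity via the invariant antisymmetric subspace, which is exactly the content of Gremban's reduction. Your reading of the displayed condition $\tilde{\M{C}}\tilde{\M{C}}^{\T}=\M{S}$ as a typo for $\M{S}^{-1}$ is also the correct interpretation, consistent with the paper's definition of an inverse factor.
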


\begin{proof}
We can prove that 
\begin{align}
\M{\Lambda}^{-1}=\frac{1}{2}\left[\begin{array}{cc}
\M{I}_{n} & -\M{I}_{n}\end{array}\right]\M{S}^{-1}\left[\begin{array}{c}
\M{I}_{n}\\
-\M{I}_{n}
\end{array}\right]
\end{align}

thus we have $\M{C}\M{C}^{\T}=\M{\Lambda}^{-1}$.
\end{proof}
Note that if $\tilde{\M{C}}$ is $2n \times 2n$, $\M{C}$ will be an $n\times 2n$ matrix.
Given the non-square inverse factor $\M{C}$, we can draw a sample from multivariate Gaussian distribution
  with covariance $\M{\Lambda}^{-1}$ as follows.
First draw $2n$ i.i.d. standard Gaussian random variables
  $\V{x}=(x_{1},x_{2},\dots,x_{2n})^{\T}$,
  then $\V{y}=\M{C}\V{x}$ is a multivariate Gaussian random vector with covariance
  $\M{\Lambda}^{-1}$.

\section{Linear Normalization of SDDM Matrices}
\label{sec:normalization}
For a SDDM matrix $\M{D}-\M{A}$, one way to normalize it is to multiply $\M{D}^{-1/2}$ on both sides,
\begin{align}
\M{D}-\M{A} = \M{D}^{1/2} \left( \M{I} - \M{D}^{-1/2}\M{A}\M{D}^{-1/2} \right) \M{D}^{1/2}.
\end{align}
Because $\rho\left( \M{D}^{-1/2}\M{A}\M{D}^{-1/2}\right) < 1$, this enables us to approximate $\left( \M{I} - \M{D}^{-1/2}\M{A}\M{D}^{-1/2} \right)^p$. It is not clear how to undo the normalization and obtain $\left(\M{D}-\M{A}\right)^p$ from it, due to the fact that $\M{D}$ and $\M{A}$ do not commute. Instead, we consider another normalization as in Lemma \ref{lem:ixOk} and provide its proof.

\restateIxOk*

\begin{proof}
Let $j=\arg\max_i d_i$, by definition $c = \left(1 - 1/\kappa\right) / d_j$. 

Let $\left[\lambda_{\min},\lambda_{\max}\right]$ denote the range of the eigenvalues of $\M{D}-\M{A}$.
By Gershgorin circle theorem and the fact that $\M{D}-\M{A}$ is diagonally dominant, we have that $\lambda_{\max} \leq 2 d_j$.
Also $d_j = \V{e}_j^{\T} \left( \M{D}-\M{A} \right) \V{e}_j$, so $\lambda_{\max} \geq d_j$, and therefore $\lambda_{\min} \geq \lambda_{\max} / \kappa \geq d_j / \kappa$. 

The eigenvalues of $c\left( \M{D}-\M{A} \right)$ lie in the interval $\left[c \lambda_{\min},c \lambda_{\max}\right]$, and can be bounded as
\begin{align}
\frac{1}{2\kappa} \leq \left(1-\frac{1}{\kappa}\right)\frac{1}{d_j} \left(\frac{d_j}{\kappa}\right) \leq c\lambda_{\min}\leq c\lambda_{\max}\leq  \left(1-\frac{1}{\kappa}\right)\frac{1}{d_j} \left(2 d_j\right) \leq 2-\frac{1}{2\kappa}.
\end{align}
To see that $\M{X} = \left(\M{I}-c \M{D}\right)+c\M{A}$ is a nonnegative matrix, note that $\M{A}$ is nonnegative and $\M{I}$ is entrywise larger than $c\M{D}$. 
\end{proof}

\section{Inverse Factor by SDDM Factorization}
\label{sec:mbfactor}
We restate and prove Lemma \ref{lem:sddfactor}.

\restateSddFactor*

\begin{proof}
Let $\V{e}_i$ be the $i\textsuperscript{th}$ standard basis, then $\M{M}$ can be represented by
\begin{align}
\M{M} =\frac{1}{2} \sum_{1 \le i < j \le n} \left| \Mij{\M{M}}{i}{j} \right|
 \left(\V{e}_i -\V{e}_j \right)
\left(\V{e}_i - \V{e}_j \right)^{\T}
+ \sum_{i=1}^{n} a_i \V{e}_i \V{e}_i^{\T},
\end{align}
where $a_i =  \Mij{\M{M}}{i}{j} -\sum_{j \neq i}\left| \Mij{\M{M}}{i}{j} \right| \geq 0$. Indeed, we can rewrite $\M{M}$ as $\sum_{i=1}^m \V{y}_i\V{y}_i^{\T}$ this way, and by setting $\M{B}=\left(\V{y}_1,\V{y}_2,\dots,\V{y}_m \right)$ concludes the proof for statement (1).

For statement (2),
\begin{align}
\begin{split}
\tilde{\M{C}}\tilde{\M{C}}^\T &= \M{Z} \M{B}\M{B}^\T \M{Z}^\T \\
  &= \M{Z} \MM \M{Z} \\
  &\approx_{\epsilon} \M{Z} \M{Z}^{-1} \M{Z} = \M{Z}.
\end{split}
\end{align}
The third line is true because of Fact \ref{fact}.d and Fact \ref{fact}.e.
From $\tilde{\M{C}}\tilde{\M{C}}^\T \approx_{\epsilon} \M{Z}$, by Fact \ref{fact}.c,
  we have $\tilde{\M{C}}\tilde{\M{C}}^\T \approx_{2\epsilon} \M{M}^{-1}$.

The total work and depth of constructing the operator $\M{Z}\M{B}$
  is dominated by the work and depth of constructing the approximate inverse $\M{Z}$,
  which we refer to the analysis in \cite{PengSpielman}.
The same statement also holds for the work and depth of matrix-vector multiplication of $\M{Z}\M{B}$.
\end{proof}

\section{Residue of Maclaurin Expansion}
\label{apd:maclaurin}
In this section we prove Lemma \ref{lem:taylorscalar}. We start with the following proposition.
\begin{proposition} \label{prp:scaleTaylor}
For $\left|x\right|\leq \delta <1$, we have 
\begin{align}
\sup_{\left|x\right|\leq \delta} \left|f_p\left(x\right)-g_{p,t}\left(x\right)\right| \leq \frac{\delta^{t+1}}{1-\delta}.
\end{align}
where $g_{p,t}\left(x\right)$ is the $t\textsuperscript{th}$ order Maclaurin series expansion of $f_p(x)=\left(1-x \right)^{p}$.
\end{proposition}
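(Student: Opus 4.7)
The plan is to bound the tail of the generalized binomial series for $f_p(x) = (1-x)^p$ term-by-term and then sum a geometric series. First I would write out the Maclaurin expansion explicitly,
\begin{align*}
f_p(x) = \sum_{k=0}^{\infty} \binom{p}{k} (-x)^k,
\qquad
\binom{p}{k} = \frac{p(p-1)(p-2)\cdots(p-k+1)}{k!},
\end{align*}
so that $g_{p,t}(x)$ is the partial sum through degree $t$ and the remainder can be written as
\begin{align*}
f_p(x) - g_{p,t}(x) = \sum_{k=t+1}^{\infty} \binom{p}{k} (-x)^k.
\end{align*}
This series converges absolutely for $|x| \le \delta < 1$, so the manipulations below are all legal.

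The key technical step is to establish the uniform bound $|\binom{p}{k}| \le 1$ for all $p \in [-1,1]$ and all integers $k \ge 0$ (the restriction $p \in [-1,1]$ is the setting in which this proposition is invoked inside Lemma~\ref{lem:taylorscalar}). This follows from the observation that, for $j \ge 1$ and $p \in [-1,1]$, one has $|p - j| = j - p \le j + 1$, while $|p| \le 1$. Therefore
\begin{align*}
|\binom{p}{k}| = \frac{|p|\,|p-1|\,|p-2|\cdots|p-k+1|}{k!} \le \frac{1 \cdot 2 \cdot 3 \cdots k}{k!} = 1.
\end{align*}

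With this bound in hand, the proposition follows immediately from the triangle inequality and a geometric-series estimate:
\begin{align*}
|f_p(x) - g_{p,t}(x)|
\le \sum_{k=t+1}^{\infty} |\binom{p}{k}|\,|x|^k
\le \sum_{k=t+1}^{\infty} \delta^k
= \frac{\delta^{t+1}}{1-\delta}.
\end{align*}
I do not anticipate any real obstacle here; the only mild subtlety is the bound $|\binom{p}{k}| \le 1$, which crucially uses $p \in [-1,1]$ and would fail for $|p| > 1$. This uniform bound is what allows us to sidestep Taylor-remainder formulas (which would produce unpleasant derivative bounds blowing up near $x = 1$) and reduce the whole estimate to a geometric series.
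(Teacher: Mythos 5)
Your proof is correct and follows essentially the same route as the paper's: both bound every Maclaurin coefficient of $(1-x)^p$ by $1$ (the paper does this by induction on the ratio $|a_{k+1}|=\bigl|\tfrac{p-k}{k+1}\bigr||a_k|$, you do it directly from the product formula for the generalized binomial coefficient --- the same estimate) and then sum the geometric tail. Your explicit remark that the coefficient bound requires $p\in[-1,1]$ is a point the paper leaves implicit, but there is no substantive difference in approach.
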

\begin{proof}
We can represent $g_{p,t}\left(x\right)$ as $\sum_{i=0}^{\infty} a_i x^i$.

We prove $\left|a_i\right| \leq 1$ for all $i$ by induction, because $\left|a_1\right| = \left|p\right| \leq 1$, and
\begin{align}
\left|a_{k+1}\right| = \left|\frac{p-k}{k+1}\right| \left|a_{k}\right| \leq \left|a_{k}\right|.
\end{align}
Therefore,
\begin{align}
\left|f_p\left(x\right)-g_{p,t}\left(x\right)\right|=\left|\sum_{i=t+1}^{\infty} a_i x^i \right| \leq \sum_{i=t+1}^{\infty} \delta^{i} = \frac{\delta^{t+1}}{1-\delta}. 
\end{align}
\end{proof}

\restateTaylor*

\begin{proof}
Let $g_{p,t}\left( x \right)$ be the $t\textsuperscript{th}$ order Maclaurin series expansion of $f\left(x\right)=\left(1-x\right)^{p}$. Because $\left|1-\lambda\right| \le \delta$, by Proposition \ref{prp:scaleTaylor},
\begin{align}
\left| g_{p,t}\left(1-\lambda\right) - \lambda^p \right| \leq \frac{\delta^{t+1}}{1-\delta}.
\end{align}

We define $T_{p,t}\left(x\right)=g_{p,t}\left(1-x \right)$. Also $\lambda^p \geq 1-\delta$, because $p \in [-1, 1]$ and $\lambda \in [1-\delta,1+\delta]$. So we have
\begin{align}
\left| T_{p,t}\left(\lambda\right) - \lambda^p \right| \leq \frac{\delta^{t+1}}{(1-\delta)^2} \lambda^p.
\end{align}

Therefore, when $t = (1-\delta)^{-1}  \log(1/(\epsilon(1-\delta)^2)))$, we get $\frac{\delta^{t+1}}{(1-\delta)^2} < \epsilon$ and 
\begin{align}
  \left| T_{p,t}\left(\lambda\right) - \lambda^p \right| \leq  \epsilon \lambda^p,
\end{align}

which implies
\begin{align}
\exp(-\epsilon) \lambda^{p} \leq T_{p,t}\left(\lambda\right) \leq  \exp(\epsilon) \lambda^{p}. 
\end{align}
\end{proof}
\end{appendix}

\end{document}